\documentclass[a4paper,10pt]{article}
\usepackage{amsfonts}
\usepackage{amscd,color}
\usepackage{amsmath,amsfonts,amssymb,amscd}
\usepackage{indentfirst,graphicx,epsfig}
\usepackage{graphicx}
\input{epsf}
\usepackage{graphicx,indentfirst}
\usepackage{epstopdf}
\usepackage{caption}
\usepackage[ruled]{algorithm2e}
\usepackage{enumerate}
\usepackage{xcolor}
\usepackage{calligra}
\usepackage{etoolbox}
\usepackage{tikz}
\usepackage{algorithmicx}
\usepackage{algpseudocode}
\usepackage[colorlinks,linkcolor=blue,anchorcolor=blue,citecolor=blue]{hyperref}

\setlength{\textwidth}{152mm}
 \setlength{\textheight}{230mm}
\setlength{\headheight}{2cm}
 \setlength{\topmargin}{0pt}
\setlength{\headsep}{0pt}
\setlength{\oddsidemargin}{0pt}
\setlength{\evensidemargin}{0pt}

\parskip=3pt

\voffset -25mm \rm

\newtheorem{theorem}{Theorem}[section]
\newtheorem{definition}[theorem]{Definition}
\newtheorem{problem}[theorem]{Problem}

\newtheorem{lemma}[theorem]{Lemma}
\newtheorem{observation}[theorem]{Observation}
\newtheorem{claim}{Claim}

\newtheorem{corollary}[theorem]{Corollary}
\newtheorem{remark}[theorem]{Remark}
\newtheorem{pro}[theorem]{Proposition}
\newenvironment {proof} {\noindent{\em Proof.}}{\hspace*{\fill}$\Box$\par\vspace{4mm}}
\newcommand{\ml}{l\kern-0.55mm\char39\kern-0.3mm}

\baselineskip=20pt

\begin{document}
\title{\textbf{Conflict-free
chromatic index of trees}
}

\author{Shanshan Guo\footnote{Center for Discrete Mathematics and Theoretical Computer Science, Fuzhou University, Fuzhou, Fujian, China. {\tt
15738385820@163.com}}, \ \
Ethan Y.H. Li \footnote{School of Mathematics and Statistics,
Shaanxi Normal University, Xi'an, Shaanxi, China.
{\tt Email: yinhao\_li@snnu.edu.cn} }, \ \
Luyi Li \footnote{Academy of Mathematics and Systems Science, Chinese Academy of Sciences, Beijing, China.
{\tt Email: liluyiplus@gmail.com} }, \ \
Ping Li\footnote{Corresponding author: School of Mathematics and
Statistics, Shaanxi Normal University, Xi'an, Shaanxi, China. {\tt
Email: lp-math@snnu.edu.cn}}
}
\date{}
\maketitle

\begin{abstract}
A graph $G$ is  conflict-free $k$-edge-colorable
if there exists an assignment of $k$ colors to $E(G)$
such that
for every edge $e\in E(G)$,
there is a color that is assigned to exactly one edge among
the closed neighborhood of $e$.
The smallest $k$ such that $G$ is conflict-free $k$-edge-colorable
is called the conflict-free chromatic index of $G$, denoted
$\chi'_{CF}(G)$.
D\c{e}bski and Przyby\a{l}o showed that
$2\le\chi'_{CF}(T)\le 3$ for every tree $T$ of size at least two.
In this paper,
we present an algorithm to determine the conflict-free chromatic index of a tree without 2-degree vertices,
in time $O(|V(T)|)$.
This partially answer a question raised by Kamyczura, Meszka and Przyby\a{l}o.
\\[0.2cm]
{\bf Keywords: conflict-free edge-coloring,  conflict-free chromatic index, tree}
\end{abstract}

\section{Introduction}

Motivated by frequency assignment in cellular networks, Even et al. \cite{Even} and Smorodinsky \cite{Smor} started studying  conflict-free vertex-coloring of graphs.
Let $G$ be a graph with vertex set $V(G)$ and edge set $E(G)$.
For every vertex $v\in V(G)$, let $N_G[v]= N_G(v)\cup\{v\}$.
If there is a vertex coloring $c:V(G)\rightarrow \mathbb{N}_+$ such that
for each vertex $v\in V(G)$,
there exists a vertex $w\in N_G[v]$  such that $c(w)$ is unique in $N_G[v]$
and the size of $c$ is as small as possible,
then  the size of $c$ is said to be the\emph{ conflict-free chromatic number} of $G$.
In the past twenty years,
the study of conflict-free chromatic number of graphs  has witnessed significant developments.
For more results, please refer to \cite{Abel,Alon,Even,Gargano,Horev,Lev,Pach,Smor}.

Recently, D\c{e}bski and  Przyby\a{l}o \cite{MJ2022} presented
an edge  version of conflict-free coloring.
Let $E_G(v)$
denote the set of edges incident with a vertex $v$ in $G$,
and let $E_G(uv):=E_G(u)\cup E_G(v)$  denote the \emph{closed neighbourhood} of
every edge $uv\in E(G)$.
When no confusion can occur, we shortly write $E(v)$ and $E(uv)$ respectively.
An \emph{edge-coloring} $c$ of  $G$ is a mapping
from $E(G)$ to a color set. In an edge-coloring $c$, if a color is assigned to exactly one edge in $E_G(e)$, then we call it a \emph{conflict-free color} of $e$. Note that an edge may have more than one conflict-free colors. A graph $G$ is called \emph{conflict-free $k$-edge-colorable} if there exists an edge-coloring of $k$ colors such that each edge $e\in E(G)$ has a conflict-free color.
The smallest $k$ that $G$ is conflict-free $k$-edge-colorable
is called the \emph{conflict-free chromatic index} of $G$,
denoted $\chi'_{CF}(G)$.
In addition,
D\c{e}bski and  Przyby\a{l}o \cite{MJ2022} also showed
that the conflict-free chromatic index of graph $G$ of maximum degree
$\Delta$ is at most $O(\ln \Delta)$
and the conflict-free chromatic index of $K_n$ is at least $\Omega(\ln n)$.
D\c{e}bski and  Przyby\a{l}o \cite{MJ2022}, and
Kamyczura et al. \cite{KMP}
gave the following result  independently.

\begin{theorem}[\cite{MJ2022,KMP}]
For any tree $T$, $\chi'_{CF}(T)\le 3$.
\end{theorem}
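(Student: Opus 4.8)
The plan is to root $T$ at an arbitrary vertex $r$ and colour the edges greedily in a top-down (breadth-first) sweep, colouring at each vertex $v$ all the edges joining $v$ to its children. Writing $p$ for the parent of a non-root vertex $v$, the one structural feature I would maintain throughout is the following invariant: \emph{for every vertex $w$ there is a colour that occurs exactly once among the edges of $E(w)$.} Call such a colour a \emph{private colour} of $w$. The point of this invariant is that it does all the work at once. Indeed, if $c_i$ is a \emph{leaf} child of $v$, then $E(vc_i)=E(v)$, so a private colour of $v$ is automatically a conflict-free colour of $vc_i$; and if $\gamma$ is a private colour of $p$, then — since the edges from $v$ to its children are the only edges of $E(v)$ not already in $E(p)$ — I can keep $\gamma$ private across the whole of $E(pv)=E(p)\cup E(v)$ simply by forbidding $\gamma$ on the down-edges of $v$, which makes $pv$ conflict-free.

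Concretely, when processing $v$ I would first pick a private colour $\gamma$ of $p$ (it exists by the invariant at $p$, which was fixed when $p$ was processed and is untouched afterwards, since colouring the edges below $v$ never changes $E(p)$), and colour every down-edge of $v$ with one of the two colours other than $\gamma$. This already makes $pv$ conflict-free. It then remains to re-establish the invariant at $v$, i.e.\ to create a private colour of $v$ using only those two colours. Here I distinguish two cases according to the colour $a:=c(pv)$ of the up-edge. If $a=\gamma$, then $a$ occurs only on $pv$ throughout $E(v)$ and is already private at $v$, so any choice of the down-edges works (in particular a leaf $v$, where $E(v)=\{pv\}$, needs nothing). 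If $a\neq\gamma$, I colour exactly one down-edge with the remaining third colour $\delta$ (the one different from both $\gamma$ and $a$) and all other down-edges with $a$; then $\delta$ occurs exactly once in $E(v)$ and is private. The root is handled by the same device with no up-edge constraint: colour one of its down-edges $1$ and the rest $2$.

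The verification that this yields a conflict-free colouring is then bookkeeping: every edge of $T$ is the up-edge of its lower endpoint, and each such edge is made conflict-free either as a leaf-edge (via the private colour of $v$ guaranteed by the invariant) or when its lower endpoint is processed (via the private colour of the parent). I expect the main obstacle to be proving that the two demands placed on the colouring of a single vertex — making the up-edge conflict-free and simultaneously re-creating a private colour — can \emph{always} be met together, and in particular that high-degree vertices cause no trouble. The argument above shows they do not: because only one colour needs to be private, an arbitrarily large star can dump all but one of its edges onto a single repeated colour, so the degree of $v$ plays no role. The only genuinely delicate point is the interaction between the forbidden colour $\gamma$ and the up-edge colour $a$, which is precisely what the split into the cases $a=\gamma$ and $a\neq\gamma$ resolves.
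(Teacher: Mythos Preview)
The paper does not prove this theorem; it is quoted from \cite{MJ2022,KMP} as background, so there is no in-paper argument to compare against. Your proposal therefore has to stand on its own, and it does: the top-down greedy scheme is correct.

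The only point worth spelling out a bit more cleanly is the base and the leaf case. At the root $r$ you colour one down-edge $1$ and the rest $2$, which gives $r$ the private colour $1$; this starts the invariant. For a leaf $v$ the set $E(v)=\{pv\}$ trivially has $c(pv)$ as a private colour, so the invariant holds there without doing anything, and the case split ``$a=\gamma$ vs.\ $a\neq\gamma$'' is only ever invoked at internal non-root vertices, where there is at least one down-edge available to carry $\delta$ when needed. With that clarified, every edge $uv$ (say $u$ the parent) is handled: if $v$ is a leaf then $E(uv)=E(u)$ and the private colour of $u$ is conflict-free for $uv$; if $v$ is internal then when you process $v$ you forbid a private colour $\gamma$ of $u$ on all down-edges of $v$, so $\gamma$ remains unique in $E(uv)=E(u)\cup E(v)$. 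The invariant at $v$ is then re-established exactly as you describe, and since $E(u)$ is finalised once $u$ is processed, the choice of $\gamma$ is stable across all children of $u$. Three colours suffice because at each internal vertex you only ever need one forbidden colour (the parent's private $\gamma$) and one distinguished colour (either $\gamma$ itself or the third colour $\delta$) to make private at $v$, with everything else dumped onto the remaining colour. There is no hidden obstruction at high-degree vertices.
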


Note that the upper bound for the conflict-free chromatic index of a tree is tight, and it is reached when $T$ is a complete binary tree of height 3.
Furthermore, Kamyczura et al. \cite{KMP} raised the following problem.

\begin{problem}[\cite{KMP}]
  Characterize the family of all trees $T$ with $\chi'_{CF}(T)=3$.
\end{problem}

In this paper, we study the above problem by forbidding 2-degree vertices in $T$.
We now introduce some notations. In this paper we shall always assume that in any 2-edge-coloring of $T$ the edges are colored red or blue, and we use $E_r,E_b$ to denote the sets of edges with color red and blue, respectively. For a vertex $v$ of $T$, if all but one edge $e$ of $E(v)$ is colored by red (resp. blue), then red (resp. blue) is called the {\em unique color} on $E(v)$, and $e$ is called the {\em unique edge} of $E(v)$. The unique color and unique edge of $E(uv)$ are defined similarly.
For a rooted tree $T$,
we call each non-root vertex $u$  a {\em leaf (vertex)} if its degree $d_T(u)=1$, and call each vertex  of degree greater than one an {\em inner vertex}. Moreover, the edge incident with a leaf is called a {\em leaf edge}.
For any non-root vertex $v\in V(T)$,
we use $v^+$ to denote the father of $v$.

A rooted tree of level $\ell+1$ is called a {\em full tree} 
if the $0$-th level has exactly one vertex (the {\em root vertex} of $T$), and for each $1\leq i\leq \ell-1$, each vertex of the $i$-th level has at least two sons.
The level of a tree $T$ is denoted by $\ell(T)$ (note that if $T$ is an isolated vertex, then $\ell(T)=1$).
A rooted tree is a
{\em complete tree} if each inner vertex  has at least two sons.
Note that a full tree must be a complete tree. 
It is obvious that full trees and complete trees do not contain 2-degree vertices.
Denote the vertex set in the $i$-th level of $T$ by $L_i(T)$.
For a (partial edge-colored) tree $T$ and a  vertex $v\in V(T)$,
we use $Sub_T(v)$ to denote the (partial edge-colored) subtree induced by $v^+ ,v$ and all descendants of $v$.
If $Sub_T(v)$ is a full tree but
$Sub_T(v^+)$ is not a full tree,
then we say $Sub_T(v)$ is a {\em maximal full subtree} of $T$ with root vertex $v^+$.

The rest of the paper is organized as follows.
In Section 2,  we give a sufficient and necessary condition for trees 
without 2-degree vertices being conflict-free 2-edge-colorable.
Section 3 is devoted to studying
the local construction of trees with conflict-free number two
without 2-degree vertices.
Using these constructions,  we presents an algorithm to determine the conflict-free chromatic index of trees without 2-degree vertices in time $O(|V(T)|)$, and we prove the feasibility of the algorithm.
In Section 4, we consider 2-degree vertices and give a sufficient condition for the trees with conflict-free index two.

\section{Characterizations of trees with conflict-free index two}
In this section, we give a sufficient and  necessary condition
for trees without $2$-degree vertices being conflict-free 2-edge-colorable.
We first give a simple observation as follows.

\begin{observation}\label{obs}
For a tree $T$, if $\chi'_{cf}(T)=2$ and $\gamma$ is a conflict-free red/blue edge-coloring of $T$, then for every inner vertex $v$, either $E(v)$ is monochromatic or $E(v)$ contains a unique color. Moreover, if $v$ is incident with a pendent edge, then $E(v)$ contains a unique color.
\end{observation}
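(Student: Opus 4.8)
The plan is to argue both statements by contradiction, exploiting the single structural fact that for any edge $e=uv$ one has $E(v)\subseteq E(uv)=E(u)\cup E(v)$, so the number of red edges (and likewise of blue edges) in the closed neighbourhood $E(uv)$ is at least as large as in $E(v)$ alone. Since $\gamma$ is conflict-free, every edge $e$ must possess a color occurring on exactly one edge of its closed neighbourhood $E_G(e)$. The strategy is to show that each forbidden configuration forces \emph{both} colors to occur at least twice in some closed neighbourhood, which destroys this property.

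For the first statement, suppose some inner vertex $v$ has $E(v)$ that is neither monochromatic nor contains a unique color. By the definition of unique color, this means each of red and blue appears on at least two edges of $E(v)$. Since $v$ is an inner vertex, $d_T(v)\geq 2$, so we may fix any $e=uv\in E(v)$. As $E(v)\subseteq E(uv)$, both red and blue still appear at least twice in $E(uv)$, hence no color is assigned to exactly one edge of $E(uv)$. Thus $e$ has no conflict-free color, contradicting that $\gamma$ is conflict-free. This establishes the dichotomy for every inner vertex.

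For the \emph{moreover} part, let $v$ be an inner vertex incident with a pendent edge $e=uv$, where $u$ is a leaf, so $d_T(u)=1$ and $E(u)=\{e\}$. Then $E(uv)=E(u)\cup E(v)=E(v)$. If $E(v)$ were monochromatic, then since $d_T(v)\geq 2$ one color would appear at least twice and the other not at all in $E(uv)=E(v)$; in particular no color would occur exactly once there, so $e$ would again have no conflict-free color, a contradiction. Hence $E(v)$ is not monochromatic, and by the first statement it must contain a unique color.

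There is no substantial obstacle here: the whole argument rests on the monotonicity $E(v)\subseteq E(uv)$ together with the observation that a leaf contributes nothing new to the closed neighbourhood of its unique incident edge. The only points requiring care are the degenerate degree conditions—ensuring $d_T(v)\geq 2$ so that ``monochromatic'' genuinely precludes a color from occurring exactly once—and reading the definition of unique color correctly, so that ``neither monochromatic nor containing a unique color'' is translated precisely into ``both colors appear at least twice.''
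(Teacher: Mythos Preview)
Your proof is correct. The paper states this as an observation without proof, so there is nothing to compare against; your argument via the containment $E(v)\subseteq E(uv)$ and the identity $E(uv)=E(v)$ for a pendent edge $uv$ is exactly the intended elementary justification.
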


\begin{lemma}\label{samecolor}
Let $T$ be a tree without $2$-degree vertices. If $\chi'_{cf}(T)=2$,
then for each conflict-free red/blue $2$-edge-coloring,
there is a color being the only conflict-free color of all edges in $E(T)$.
\end{lemma}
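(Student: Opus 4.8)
The plan is to show that in any conflict-free red/blue $2$-edge-coloring $\gamma$ of $T$, every edge has exactly one conflict-free color, and then that this color is forced to coincide across all of $E(T)$. The starting point is the remark that, since $T$ has no $2$-degree vertices, every edge $e=uv$ satisfies $|E(uv)|=d_T(u)+d_T(v)-1\ge 3$: if $u$ is a leaf then $E(uv)=E(v)$ with $d_T(v)\ge 3$, while if neither endpoint is a leaf then both degrees are at least $3$. Consequently $e$ cannot have both red and blue as conflict-free colors, for that would require exactly one red edge and exactly one blue edge in $E(uv)$, i.e. $|E(uv)|=2$. Since $\gamma$ is conflict-free, each edge has at least one conflict-free color, so in fact each edge has a \emph{unique} conflict-free color; I will call it the CF-color of that edge.

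Next I would prove the heart of the lemma: any two adjacent edges have the same CF-color. Suppose instead that $e_1=uv$ and $e_2=vw$ meet at $v$ with CF-colors red and blue respectively. Because $v$ is incident to two distinct edges, $d_T(v)\ge 2$, and the no-$2$-degree hypothesis forces $d_T(v)\ge 3$. Now observe that $E(v)\subseteq E(uv)$ and $E(v)\subseteq E(vw)$. Since red is conflict-free for $e_1$, there is exactly one red edge in $E(uv)$, so $E(v)$ contains at most one red edge; symmetrically, since blue is conflict-free for $e_2$, $E(v)$ contains at most one blue edge. Therefore $d_T(v)=|E(v)|\le 2$, contradicting $d_T(v)\ge 3$. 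Hence adjacent edges always share their CF-color.

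Finally, I would upgrade this local statement to a global one using connectivity: for any two edges $e,e'$ of the connected tree $T$ there is a sequence $e=f_0,f_1,\dots,f_k=e'$ of edges in which consecutive ones share a vertex, and the previous step shows the CF-color is constant along this sequence, so all edges of $T$ carry one common CF-color. This common color is then a conflict-free color of every edge and, by the uniqueness established at the outset, the only one, which is exactly the assertion. The only place any genuine work is required is the first paragraph, and the single ingredient doing all the lifting is the absence of $2$-degree vertices, which simultaneously forces $|E(uv)|\ge 3$ (ruling out two conflict-free colors on one edge) and $d_T(v)\ge 3$ (driving the counting contradiction at the shared vertex); without this hypothesis an edge could carry both colors as conflict-free and the local argument would collapse.
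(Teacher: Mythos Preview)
Your proof is correct. It is close in spirit to the paper's argument but organised differently, and the reorganisation buys you something.

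The paper fixes a single ``seed'' edge $e$ that is the unique edge of its colour in $E_T(e)$ (justified via Observation~\ref{obs}) and then propagates: for any $f$ adjacent to $e$, the shared vertex has degree at least $3$, so $E_T(f)$ already contains $e$ and at least two blue edges from $E_T(e)$, forcing red to be the conflict-free colour of $f$ and forcing every edge in $E_T(f)\setminus\{e\}$ to be blue; iterating covers all of $T$. Your route instead first proves \emph{uniqueness} of the conflict-free colour at every edge from $|E(uv)|\ge 3$, then proves \emph{local constancy} by the neat counting $|E(v)|\le 1+1$ versus $d_T(v)\ge 3$, and finally globalises by connectivity. Both hinge on exactly the same consequence of the no-$2$-degree hypothesis (degree $\ge 3$ at any internal vertex), but your version is symmetric in the two colours, needs no special starting edge, and separates the ``one CF-colour per edge'' step from the ``same CF-colour across edges'' step; the paper's propagation, on the other hand, simultaneously pins down the actual colouring (a single red matching with all other edges blue), which is slightly more information than the lemma asserts.
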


\begin{proof}
By Observation \ref{obs},
we may assume that there exist  an edge $e$ of $E(T)$ and a color,
say red,
such that $e$ is a red edge and
red is the conflict-free color of $e$.
It follows that all the edges in $E_T(e) \setminus \{e\}$ must be blue edges.
For $f \in E_T(e) \setminus \{e\}$, we have $d_T(V(f)\cap V(e))\ge 3$ since $T$ contains no 2-degree vertices, which yields $|E_T(f)\cap E_T(e)|\ge 3$. 
This implies that $E_T(f)$
contains exactly one red edge and at least two blue edges.
Thus, all the edges in $E_T(f) \setminus \{e\}$ must be blue edges and the conflict-free color of $f$ is red. Continuing this process, it follows that red is the only conflict-free color for each $e \in E(T)$.
Then the lemma holds.
\end{proof}
From now on we will call this color \emph{the conflict-free color} of $T$.

\begin{theorem}
Let $T$ be a tree of at least 3 vertices without 2-degree vertices. Then $\chi'_{cf}(T)=2$  if and only if $T$ has a maximal matching $M$ such that $T[V(M)]=M$.

\end{theorem}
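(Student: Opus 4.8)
The plan is to translate the conflict-free condition into a purely combinatorial condition on the set of edges carrying the (unique) conflict-free color. By Lemma~\ref{samecolor}, any conflict-free red/blue $2$-edge-coloring of $T$ has a single color, say red, that is simultaneously the conflict-free color of every edge. Unwinding the definition, this says precisely that the set $M := E_r$ of red edges satisfies the following property $(\star)$: for every edge $e = uv$, the closed neighborhood $E_T(e) = E(u)\cup E(v)$ contains exactly one red edge. The whole theorem should then fall out of showing that a set of edges $M$ satisfies $(\star)$ (for the coloring that makes $M$ red and everything else blue) if and only if $M$ is a maximal matching with $T[V(M)] = M$.

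For the forward direction, I would assume $\chi'_{cf}(T) = 2$ and fix such a coloring with red-edge set $M$ satisfying $(\star)$. First I check that $M$ is a matching: if two red edges met at a vertex $w$, then any red edge at $w$ would see at least two red edges in its closed neighborhood, violating $(\star)$. Next, to get $T[V(M)] = M$, I suppose some edge $f = xy \notin M$ has both $x, y \in V(M)$; then the red edge at $x$ and the red edge at $y$ are distinct (a common value would have to be $f$ itself, which is blue), so $E_T(f)$ holds two red edges, again contradicting $(\star)$. Finally, for maximality: were $M$ not maximal, some edge $g = pq$ would have both endpoints uncovered by $M$, whence $E_T(g)$ would contain no red edge at all, contradicting that red is the conflict-free color of $g$. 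Thus $M$ is a maximal matching with $T[V(M)]=M$.

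For the converse, given a maximal matching $M$ with $T[V(M)] = M$, I color $M$ red and the rest blue and verify $(\star)$ edge by edge. A red edge $e = uv$ sees only itself in red, since the matching property forbids another red edge at $u$ or $v$. For a blue edge $e = uv$, maximality of $M$ forces at least one of $u, v$ to be covered, while $T[V(M)] = M$ forbids both from being covered (otherwise $e$ would be a non-matching edge inside $T[V(M)]$); hence exactly one endpoint is covered, contributing exactly one red edge to $E_T(e)$. This establishes $(\star)$, so red is a conflict-free color for every edge and $\chi'_{cf}(T) \le 2$. Together with the trivial lower bound $\chi'_{cf}(T) \ge 2$ — a single color fails because any edge adjacent to another edge (which exists once $T$ has at least three vertices) then sees its color at least twice — this yields $\chi'_{cf}(T) = 2$. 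I expect the main obstacle to be the converse's case analysis for blue edges, where both the maximality of $M$ and the induced-matching condition $T[V(M)] = M$ must be used in tandem to pin the red-edge count at exactly one; the no-$2$-degree hypothesis itself enters only through Lemma~\ref{samecolor}, which supplies the single conflict-free color driving the entire argument.
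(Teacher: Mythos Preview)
Your proof is correct and follows essentially the same approach as the paper: invoke Lemma~\ref{samecolor} to force a single conflict-free color, then identify the red edge set with an induced maximal matching, and for the converse color $M$ red and verify $(\star)$. Your argument is in fact slightly more careful than the paper's in two places: for maximality of $E_r$ you correctly argue via an uncovered edge $g$ having no red neighbor (the paper adds the superfluous clause $T[V(E_r\cup\{g\})]=E_r\cup\{g\}$), and in the converse you explicitly treat the case $|E_T(e)\cap M|=0$ via maximality, which the paper's contradiction argument silently skips.
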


\begin{proof}
Let $\chi'_{cf}(T)=2$ and take a conflict-free 2-edge-coloring of $T$.
By Lemma \ref{samecolor},
there exists a color, say red, being the conflict-free color of all edges
in $E(T)$. It follows that $E_r$ is a matching.
Then $T[V(E_r)] = E_r$ since otherwise there exists a blue edge connecting two red edges, which implies that red is not the conflict-free color of this edge,
a contradiction.
Suppose that $E_r$ is not maximal and there exists $g \in E(T)$ such that $E_r \cup \{g\}$ is a matching and $T[V(E_r \cup \{g\})] = E_r \cup \{g\}$.
Then $g$ is colored blue and all the edges adjacent to $g$ are colored blue, which is impossible since the edge-coloring of $T$ is conflict-free.

Conversely, if $T$ has a maximal matching $M$ such that $T[V(M)]=M$, then color the edges in $M$ red and color the edges in $E(T)-M$ blue. Suppose the resulting edge-coloring is not conflict-free. Then there must exist an edge $e$ such that $E_T(e)$ contains two red edges, which implies that $M$ is not a matching or $E(T[V(M)])- M \neq \emptyset$, a contradiction. It follows that $\chi'_{cf}(T)=2$ since $T$ has at least 2 edges.
\end{proof}

\section{Binary trees}

In this section,
all trees $T$ are oriented as out-branchings such that the degree of the root vertex is one, and for convenience, we call $T$ a tree instead of an out-branching.
If $\chi'_{cf}(T)=2$, then for any conflict-free edge-coloring of $T$ by two colors red and blue, and by Lemma \ref{samecolor} we may always assume that conflict-free color of $T$ is red.
It follows that for each inner vertex $v\in V(T)$, there is at most one red edge incident with $v$. If all out-edges of $v$ are blue, then we call $v$ an \textit{S-vertex}; if there is an out-edge of $v$ is red, then we call $v$ a \textit{D-vertex}, see Figure \ref{fig-SDvertex}.

\begin{figure}[htbp]
	\centering
	\scalebox{1}{\includegraphics[width=0.6\textwidth]{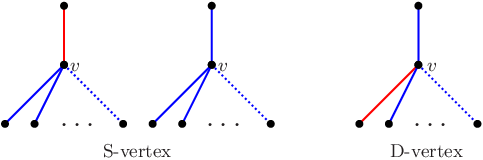}}\\
	\caption{$S$-vertex and $D$-vertex (red is the conflict-free color of $T$)}\label{fig-SDvertex}
\end{figure}

\begin{lemma}\label{same}
	Let $T$ be a full subtree of some tree $F$. In each conflict-free $2$-edge-coloring of $F$, the vertices in the same level of $T$ are either all S-vertices or all D-vertices.
\end{lemma}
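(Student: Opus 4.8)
The plan is to fix an arbitrary conflict-free $2$-edge-coloring of $F$ and work entirely with the red edges. By Lemma \ref{samecolor} we may assume red is the conflict-free color of every edge, which gives two structural facts: each vertex is incident with at most one red edge (so the red edges form an induced matching), and for every edge $e=uv$ the closed neighbourhood $E(u)\cup E(v)$ contains exactly one red edge. In this language a vertex is a D-vertex exactly when its unique incident red edge is an out-edge, and an S-vertex when all of its out-edges are blue. I would in fact prove the stronger statement that, level by level, both the type (S or D) and the colour of the in-edge are forced, so that in particular the type is constant on each level.

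First I would settle the two deepest inner levels $\ell-1$ and $\ell-2$ as a base case. Since $T$ is a full tree, every level-$(\ell-1)$ vertex has only leaf sons, so for such a vertex $w$ the neighbourhood of each leaf edge equals $E(w)$; hence $E(w)$ contains exactly one red edge. The key step is to rule out a D-vertex $x$ at level $\ell-2$: such an $x$ would have a red out-edge to one son $w_1$, while any other son $w_2$ (at least two sons exist, as $T$ has no $2$-degree vertex) has a blue in-edge and only leaf sons, so the single red edge of $E(w_2)$ must be an out-edge of $w_2$; then the blue edge $xw_2$ would see two red edges — the red out-edge at $x$ and the red out-edge at $w_2$ — contradicting the exactly-one-red property. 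Therefore every level-$(\ell-2)$ vertex is an S-vertex with a blue in-edge, and consequently each of its sons, having a blue in-edge and only leaf sons, must carry its unique red edge on an out-edge and is a D-vertex. This fixes level $\ell-1$ as all-D and level $\ell-2$ as all-S.

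For the inductive step I would move up one level at a time, assuming that level $j+1$ is uniform with a known (type, in-edge colour) and deducing the same for level $j$. The out-edges of the level-$j$ vertices are precisely the in-edges of the level-$(j+1)$ vertices, hence all of one colour. If that colour is blue, every level-$j$ vertex is an S-vertex, and applying the exactly-one-red property to one of its out-edges pins down the colour of its in-edge from the state of level $j+1$: it comes out blue when level $j+1$ is all-D with blue in-edges, and red when level $j+1$ is all-S with blue in-edges. If that common colour were red, then every level-$j$ vertex would have all of its (at least two) out-edges red, contradicting the matching property; so this case cannot arise at a level whose vertices branch. In each admissible case the outcome is uniform, completing the induction and showing that every level carries a single type. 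I expect the main obstacle to be the base-case ``two reds'' argument together with the bookkeeping that guarantees the at-least-two-sons hypothesis is invoked exactly where the matching property is contradicted, and with treating the single-son root level separately so that the red-in-edge case does not produce a spurious contradiction there.
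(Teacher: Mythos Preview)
Your argument is correct and takes a genuinely different route from the paper. The paper proceeds by contradiction: it picks the \emph{deepest} level $k$ containing both an S-vertex $v_1$ and a D-vertex $v_2$, performs a case split on the colour of $v_1v_1^+$, and in each case manufactures an S/D pair one level further down, violating the maximality of $k$. You instead run a bottom-up induction that actually determines the entire colouring pattern of $T$: from the leaf layer you force level $\ell-1$ to be all-D with blue in-edges, level $\ell-2$ all-S with blue in-edges, then level $\ell-3$ all-S with \emph{red} in-edges, which already yields a contradiction whenever $\ell-4\ge 1$. In effect your proof of Lemma~\ref{same} subsumes most of Lemma~\ref{subtree}, so it is heavier than necessary for this lemma alone but would let you merge the two statements. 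Two loose ends should be made explicit. First, when $\ell(T)\le 3$ your base-case ``two sons'' argument cannot be run (the level-$(\ell-2)$ vertex is the degree-one root), so you must dispatch those trees separately by observing that every inner level has a single vertex and the conclusion is vacuous; note that your stronger claim (the pattern is forced) is actually false there, since both $R_1$ and $R_2$ occur. Second, the assertion that each level-$(\ell-2)$ vertex has a \emph{blue} in-edge does not follow from ruling out the D-type alone; it needs one more application of the exactly-one-red rule to the edge $xw$, using that the red out-edge of $w$ already supplies the unique red in $E(xw)$.
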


\begin{proof}
Suppose to the contrary that there exists a conflict-free 2-edge-coloring for $F$ such that there are two vertices $v_1,v_2\in L_k(T)$ with $v_1$ being an S-vertex and $v_2$ being a D-vertex.
For our purpose, we may assume
$k$ is as large as possible. Recall that red is the conflict-free color of $F$.
Since $v_1$ is an S-vertex and $v_2$ is a D-vertex,
$v_2 v_2^+$ is blue, and there is an out-edge of $v_2$ is red and all out-edges of $v_1$ are blue.

If $v_1 v_1^+$ is red, then $v_1^+\neq v_2^+$ since otherwise $v_2 v_2^+$ has two adjacent red edges. Let $v_1'\neq v_1$ be a son of $v_1^+$.
Then all edges incident with $v_1'$ are blue since $v_1 v_1^+$ is red.
Hence, $v_1'$ is an $S$-vertex and each out-edge can not be a leaf edge (for otherwise this out-edge does not have a conflict-free edge, a contradiction).
It follows that $v_1$ also have two sons since $T$ is a full tree.
Let $w,w'$ be sons of $v_1,v_1'$, respectively.
Note that $w'$ must be incident with a red out-edge.
Then $w,w'$ are not leaf-vertices and it is easy to verify that $w$ is an S-vertex and $w'$ is a D-vertex, contradicting the maximality of $k$.

If $v_1 v_1^+$ is blue, then all edges incident with $v_1$ are blue, and hence each out-edge of $v_1$ is not a leaf edge.
Since $T$ is a full tree and $v_2$ lies on the same level as $v_1$, each out-edge of $v_2$ is also not a leaf edge.
Then there exists a son $w$ of $v_2$ such that $v_2w$ is red since $v_2$ is a D-vertex. It follows that $w$ is not a leaf and hence $w$ is an S-vertex.
However, since all edges incident with $v_1$ are colored blue, it is easy to verify that each son of $v_1$ is a D-vertex, which contradicts the maximality of $k$.
\end{proof}

By Lemma \ref{same} we give the following definition.
\begin{definition}
Let $C$ be a conflict-free $2$-edge-coloring of a tree $F$ and $T$ be a full subtree of $F$. Denote by $C_T$ the restriction of $C$ on $T$. Then each vertex $u$ in $L_i(T)$ $(1 \le i \le \ell-1)$ is an $X_i$-vertex, where $X_i\in\{S,D\}$.
Define the coloring pattern of $C_T$ as
\begin{equation*}\label{c}
	cp(C_T)=\left\{
	\begin{array}{ll}
		(R,X_1,X_2,\ldots,X_{L(T)}),  &\mbox{if the root edge receives the conflict-free color red},\\
	
		(B,X_1,X_2,\ldots,X_{L(T)}),   &\mbox{otherwise}.
	\end{array} \right.
\end{equation*}
We also define the coloring pattern set of $T$ to be
\begin{equation*}
  cp(T) = \{cp(C_T) \colon \mbox{$C$ is a conflict-free $2$-edge-coloring of $F$}\}.
\end{equation*}
\end{definition}

Then we proceed to show that in any conflict-free 2-edge-coloring of a complete tree $T$, the coloring patterns for its maximal full subtrees are limited to several fixed cases. Recall that if $Sub_T(v)$ is a full tree but $Sub_T(v^+)$ is not a full tree, then we say $Sub_T(v)$ is a maximal full subtree of $T$ with root vertex $v^+$.

\begin{lemma}\label{subtree}
	Let $T$ be a complete tree and
	$T'$ be the maximal full subtree of $T$.
	If $T$ is conflict-free $2$-edge-colorable,
	then $2\le \ell(T')\le 5$ and  $cp(T') \subseteq \{(B),(R),R_1,R_2,R_3,R_4\}$,
	where
	$R_1=(B,D)$,
	$R_2=(R,S)$, $R_3=(R,S,S,D)$ and $R_4=(B,S,D)$ as shown in Figure \ref{fig-cp}.
\end{lemma}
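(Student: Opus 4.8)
The plan is to extract the admissible patterns from a local analysis of how the types $X_i,X_{i+1}$ of two consecutive levels of $T'$ constrain each other, together with careful bookkeeping of which edges are red. Throughout I fix a conflict-free $2$-edge-coloring of $T$ and assume, as in Lemma~\ref{samecolor}, that red is the conflict-free color; I write $v^+$ for the root of $T'$ and $v$ for its unique son, so that the root edge is $v^+v$, and I call the edge $uu^+$ joining a vertex $u$ to its father the \emph{upward edge} of $u$. By Lemma~\ref{same} every level $L_i(T')$ has a common type $X_i\in\{S,D\}$, so a pattern is well defined. The structural point I would stress first is that for every edge $uw$ of $T'$ with $u\in L_i(T')$ and $i\ge 1$ the set $E_T(uw)=E_T(u)\cup E_T(w)$ lies entirely inside $T'$, so its conflict-free requirement is \emph{internal}; only for the root edge $v^+v$ does $E_T(v^+)$ reach edges of $T$ outside $T'$, so the root edge may obtain its conflict-free color from outside and its own color is merely recorded. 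Since $T'$ contains $v^+$ and $v$, the bound $\ell(T')\ge 2$ is immediate.

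First I would convert the conflict-free condition into transition rules. Recalling that a $D$-vertex has a blue upward edge and exactly one red out-edge while an $S$-vertex has only blue out-edges, a short count of the red edges in $E_T(uw)$ gives: an $S\to S$ step requires the upward edge of $u$ to be red; an $S\to D$ step requires it to be blue; a $D\to S$ step is always admissible; and $D\to D$ is impossible. I would record the leaf rule separately: when $L_{i+1}(T')$ consists of leaves a $D$-parent is always fine, whereas an $S$-parent is admissible only if its upward edge is red (otherwise the corresponding leaf edge sees only blue edges, at least three of them).

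Next I would track the upward-edge colors level by level. The upward edges of $L_1(T')$ all coincide with the root edge and are therefore monochromatic; for $i\ge 2$ they are all blue when $X_{i-1}=S$, but are \emph{mixed}---one red and at least one blue per father, since in the full tree $T'$ every inner vertex has at least two sons---when $X_{i-1}=D$. Feeding this into the transition rules yields the key structural fact, which I expect to be the crux of the proof: every $D$ can occur only at the last inner level $L_{\ell-1}(T')$. Indeed, if $X_i=D$ with $i<\ell-1$, then $D\to D$ being impossible forces $X_{i+1}=S$ on an inner level whose upward edges are mixed; the red-upward vertices of $L_{i+1}$ then force their sons to be $S$ while the blue-upward vertices force their sons to be $D$, contradicting Lemma~\ref{same}, and the leaf rule excludes the boundary case $i+1=\ell-1$ in the same way. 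Consequently all inner levels have monochromatic upward edges, a red upward edge occurs only at $L_1$ and only when the root edge is red, and hence the single admissible $S\to S$ step can happen only from $L_1$ to $L_2$.

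Finally I would finish with a direct case analysis on the pair (root-edge color, $X_1$). The degenerate case $\ell(T')=2$ yields exactly $(R)$ and $(B)$. For $\ell(T')\ge 3$, a red root edge forces $X_1=S$ and then the rules give either $(R,S)=R_2$ or, after the unique permitted $S\to S$ step followed by the forced $S\to D$ step, $(R,S,S,D)=R_3$; a blue root edge gives $(B,D)=R_1$ when $X_1=D$ and $(B,S,D)=R_4$ when $X_1=S$, while $(B,S)$ and $(R,S,S)$ are killed by the leaf rule. These are precisely $(B),(R),R_1,R_2,R_3,R_4$, and reading off their lengths gives $2\le\ell(T')\le 5$, the upper bound being attained only by $R_3$. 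The main obstacle is exactly the mixed-upward-edge argument of the previous paragraph: it is what forbids an early $D$ and thereby caps the depth of $T'$ at five levels.
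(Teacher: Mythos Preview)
Your proof is correct and the underlying ideas coincide with the paper's, but your organization is genuinely tidier. The paper argues by three cases on the pair (color of $v^+v$, color profile at $v$) and, within each case, traces the levels of $T'$ one by one, eventually manufacturing two same-level vertices of different types to contradict Lemma~\ref{same}. You instead distill the level-to-level constraints into four transition rules ($S\to S$, $S\to D$, $D\to S$, $D\to D$) governed by the upward-edge color, prove once and for all that a $D$ can occur only at the last inner level (this is exactly the phenomenon the paper rediscovers separately in each of its cases), and then read off the admissible patterns. What your packaging buys is uniformity: the single ``mixed upward edges after a $D$'' argument replaces the three ad hoc constructions of conflicting vertices in the paper's Cases~1--3, and the leaf rule cleanly disposes of $(B,S)$ and $(R,S,S)$. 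Conversely, the paper's direct trace makes the role of the fullness hypothesis (every inner vertex has at least two sons) more visible at the exact points where it is used. Both arguments rest on the same two facts---Lemma~\ref{same} and the observation that a $D$-vertex has a blue upward edge---so the mathematical content is the same.
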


\begin{figure}[htbp]
	\centering
	\scalebox{0.75}{\includegraphics[width=0.8\textwidth]{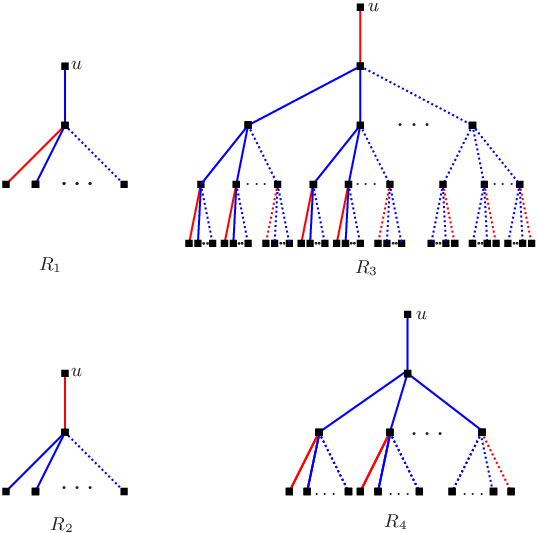}}\\
	\caption{Coloring patterns for maximal full subtrees}\label{fig-cp}
\end{figure}

\begin{proof}
Let $v^+$ be the root of $T'$ and $vv^+$ be the edge of $T'$ incident with $v$.
If $\ell(T')=2$, then $cp(T')\subseteq \{(B),(R)\}$. Now let $\ell(T') \ge 3$. Then we have the following  three cases to consider.

{\bf Case 1.}
$vv^+$ is red.

In this case $vw$ is blue for each son $w$ of $v$.
If $\ell(T')=3$, then $T'\cong R_2$.
If $4\le \ell(T')\le 5$,
then for each son $w$ of $v$, all edges incident with $w$ are blue.
Hence each son of $w$ is not a leaf and $\ell(T') = 5$.
Since $T'$ is a full tree, all sons of $w$ are D-vertices and $cp(T') = R_3$.
If $\ell(T')>5$, then since each vertex $x$ of $L_3(T')$ is a D-vertex, there are two sons of $x$, say $y_1,y_2$, such that $xy_1$ is red and  $xy_2$ is blue.
Then all edges incident with $y_2$ are blue edges, and hence each son of $y_2$ is not a leaf.
Since $T'$ is a full tree, each son of $y_1$ is also not a leaf.
Let $z_1,z_2$ be sons of $y_1,y_2$, respectively.
Then $z_1$ is an S-vertex and $z_2$ is a D-vertex, which contradicts Lemma \ref{same} since $z_1,z_2\in L_5(T')$.

{\bf Case 2.} All edges incident with $v$ are colored blue.

In this case each son $w$ of $v$ is a D-vertex and there exists a son $x$ of $w$ such that $wx$ is colored red.
Hence, if $\ell(T')=4$, then $cp(T') = R_4$.
If $\ell(T')>4$, then for any son $y\neq x$ of $w$, all edges incident with $y$ are blue.
Hence, any son of $y$ is not a leaf and each son $y'$ of $y$ is a D-vertex.
Since $x,y\in L_3(x)$ and $T'$ is a full tree, $x$ has a son $x'$ and $x'$ must be an S-vertex. This leads to a contradiction as $x'$ and $y'$ lies in the same level.

{\bf Case 3.} There is a son $w$ of $v$ with $vw$ colored red.

In this case $vv^+$ is blue, and for each son $w' \neq w$ of $v$ the edge $vw'$ is also blue.
If $\ell(T') = 3$, then $cp(T') = R_1$.
Now we assume that $\ell(T') \ge 4$.
Since $Sub_T(w)$ is a full tree, as discussed in Case 1, $Sub_T(w)$ is either $R_2$ or $R_3$.
For any son $w'$ of $v$ with $w'\neq w$, all edges incident with $w'$ are colored blue.
As discussed in Case 2, $cp(Sub_T(w')) = R_4$.
However, the level of $Sub_T(w)$ is either 3 or 5, and the level of $Sub_T(w')$
is 4. This implies that $T'$ is not a full tree, a contradiction.
\end{proof}

Now, we give some definitions  and new graphs to
further discuss the construction of complete  trees by Lemma \ref{subtree}.
Let $T_1,T_2,\ldots, T_k$ be  complete trees and
$v_i$ be the root of $T_i$ for each $i\in [k].$
We construct a complete tree $T$ from $T_1,T_2,\ldots T_k$ and a new edge $uv$
by identifying  $v_1,v_2\ldots v_k$ and $v$, denoted by
$T:=T_1\bigoplus T_2\bigoplus\ldots \bigoplus T_k$, see Figure \ref{sum}.

\begin{figure}[ht]
	\centering
	\includegraphics[width=340pt]{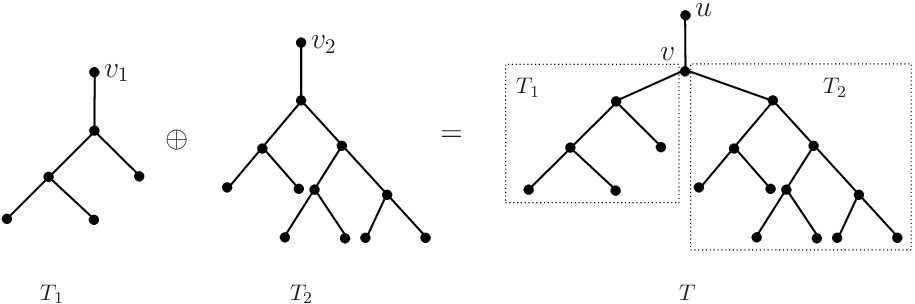}\\
	\caption{The sum of some trees}\label{sum}
\end{figure}

Let $T$ be a conflict-free 2-edge-colorable tree.
A vertex $v\in V(T)$ is called a  {\em fixed vertex}
if the coloring pattern of $T[E(v)]$ is the same
for each conflict-free 2-edge-coloring of $T$.
Let  $I=\{x^+\in V(T): Sub_T(x)\mbox{ is a maximal full tree}\}$.
A vertex $u\in L_i(T)$ is a {\em surficial vertex} of $T$ if  $i$ is maximum
in $I$ (in other words, the surficial vertex is a vertex in $I$ with largest level).

\begin{pro}\label{sur}
For each son $v$ of a surficial vertex $u$ in $T$, $Sub_T(v)$ is a  full tree.
\end{pro}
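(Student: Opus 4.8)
The plan is to argue by contradiction, using only the maximality of the level of $u$ among the roots of the maximal full subtrees of $T$. First I would isolate the elementary facts the argument rests on. For any vertex $w$, either $Sub_T(w)$ is a full tree or it is not; if $w$ is a leaf, then $Sub_T(w)$ is the full tree of level $2$, so a vertex whose subtree is not full is never a leaf and (as $T$ is complete) has at least two sons. I would also record the only reformulation of maximality needed: $Sub_T(z)$ is a maximal full subtree with root $z^+$ precisely when $Sub_T(z)$ is full while $Sub_T(z^+)$ is not. Intuitively, $Sub_T(w)$ is full exactly when all leaves below $w$ lie at a common depth, so a surficial vertex sits at the deepest level where fullness of the descending subtrees breaks down, and the proposition asserts that at such a deepest vertex every descending subtree is already full.

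Now suppose, toward a contradiction, that some son $v$ of a surficial vertex $u\in L_i(T)$ has $Sub_T(v)$ not full. Let $S$ be the set of all vertices $w$ in the subtree rooted at $v$ (including $v$ itself) for which $Sub_T(w)$ is not a full tree. Then $v\in S$, so $S\neq\emptyset$, and I may pick $w^\ast\in S$ of maximum level, say $w^\ast\in L_j(T)$. The key step is to verify that $w^\ast$ is the root of a maximal full subtree. Since $Sub_T(w^\ast)$ is not full, $w^\ast$ is not a leaf and hence has at least two sons; and by the maximality of $j$, every son $z$ of $w^\ast$ lies outside $S$, so $Sub_T(z)$ is full. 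Fixing any such $z$, the situation ``$Sub_T(z)$ full while $Sub_T(z^+)=Sub_T(w^\ast)$ not full'' is exactly the assertion that $Sub_T(z)$ is a maximal full subtree with root $w^\ast$; thus $w^\ast\in I$.

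To conclude, since $w^\ast$ lies in the subtree rooted at $v$ and $v$ is a son of $u\in L_i(T)$, we have $v\in L_{i+1}(T)$ and therefore $j\ge i+1>i$. Thus $w^\ast$ is a vertex of $I$ of strictly larger level than $u$, contradicting that $u$ is a surficial vertex, i.e. a member of $I$ of largest level. Hence $Sub_T(v)$ is full for every son $v$ of $u$. The argument is short, and the only genuinely delicate point is the selection step: one must pass to a \emph{deepest} non-full vertex, since only then are its sons guaranteed to head full subtrees, and only then does the upward non-extendability (the maximality) of the resulting full subtree come for free. The remaining care is purely bookkeeping, namely keeping the one-level shift introduced by $Sub_T(\cdot)$ (which prepends the parent edge) consistent when comparing the levels of $w^\ast$, $v$, and $u$.
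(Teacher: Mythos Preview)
Your proof is correct and follows essentially the same route as the paper's: both argue by contradiction, locate a maximal full subtree strictly below $v$, and observe that its root lies in $I$ at a level strictly greater than that of $u$. The only difference is that you explicitly construct this subtree by descending to a deepest vertex $w^\ast$ with $Sub_T(w^\ast)$ not full, whereas the paper simply asserts that $Sub_T(v')$ contains a maximal full subtree $Sub_T(w)$ and then notes $\ell(w^+)\ge\ell(v')>\ell(u)$; your version just fills in that existence step.
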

\begin{proof}
Suppose to the contrary that
there exists a son $v'$ of $u$ such that $Sub_T(v')$ is not a full tree.
Then $Sub_T(v')$ has a maximal full tree, say $Sub_{Sub_T(v')}(w)=Sub_{T}(w)$.
This implies that $w^+\in I$.
Thus, we have $\ell(w^+)\ge \ell(v')>\ell(u)$, which contradicts the maximality of $u$.
\end{proof}

Let $\mathcal{T}_k$ be a set of full trees $T$ with $\ell(T)=k$.
Let $\mathcal{T}^i_k$ denote the sum of
a family of $i$ (not necessarily distinct) elements from $\mathcal{T}_k$.
Now we define four tree families as follows.
\begin{itemize}
\item $\mathcal{F}_1:=\{\mathcal{T}^{k_1}_2
   \bigoplus \mathcal{T}^{1}_3:  k_1> 0  \}$,

\item $\mathcal{F}_2:=\{\mathcal{T}^{k_2}_2\bigoplus
   \mathcal{T}^{k_3}_4: k_2,k_3> 0  \}$,

\item $\mathcal{F}_3:=
   \{\mathcal{T}^{k_4}_2\bigoplus \mathcal{T}^{1}_3
   \bigoplus \mathcal{T}^{k_5}_4   : k_5> 0 \}$,

\item $\mathcal{F}_4:=
   \{\mathcal{T}^{k_6}_2\bigoplus \mathcal{T}^{k_7}_4
   \bigoplus \mathcal{T}^{1}_5:k_6+k_7> 0 \}$.
     \end{itemize}
It is clear that each element of $\mathcal{F}_1,\mathcal{F}_2,\mathcal{F}_3,\mathcal{F}_4$ is not a full tree. Moreover, in any conflict-free 2-edge-coloring of such a tree $T$, since $T$ consists of trees in $\mathcal{T}_1\cup \mathcal{T}_2\cup \mathcal{T}_3\cup \mathcal{T}_4$,  the coloring pattern of $T$ is determined or partially determined by Lemma \ref{subtree}.
In fact, each conflict-free 2-edge-coloring of each element in $\mathcal{F}_1,\mathcal{F}_3,\mathcal{F}_4$ is determined by Lemma \ref{subtree}, and we use $\mathcal{F}_i^*$ to denote the set of $F\in\mathcal{F}_i$ associated with the unique 2-edge-coloring (see Figure \ref{three-figs}, the red/blue edge-coloring of $F_1\in \mathcal{F}_1^*,F_3\in \mathcal{F}_3^*$ and $F_4\in \mathcal{F}_4^*$ are determined).
For a tree $\mathcal{T}^{k_2}_2\bigoplus\mathcal{T}^{k_3}_4$ of $\mathcal{F}_2$,
the coloring pattern of each $\mathcal{T}^{k_3}_4$ is determined.
We define $\mathcal{F}_2^*$ as the set of $\mathcal{T}^{k_2}_2\bigoplus\mathcal{T}^{k_3}_4\in\mathcal{F}_2$ associated with the colors on
$$\bigcup\{E(z):z\mbox{ is the a vertex in penultimate level of }\mathcal{T}^{k_3}_4\},$$
such that the color pattern of each $E(z)$ is the same as the color pattern in the unique 2-edge-coloring of the $\mathcal{T}^{k_3}_4$ (see Figure \ref{three-figs}, $F_2\in \mathcal{F}_2^*$ is a partial red/blue edge-coloring. Note that the black edges in $E(u)$ are uncolored edges).

For each element in $\mathcal{F}_2$, the corresponding partial 2-edge-coloring in $\mathcal{F}^*_2$ can be extended to two conflict-free 2-edge-coloring patterns $\mathcal{F}_2^1$ and $\mathcal{F}_2^2$ (see Figure \ref{two-f7}, the two conflict-free 2-edge-colorings are  distinguished by the types of $v$, say D-vertex or S-vertex).

\begin{theorem}\label{111111}
Let $T$ be a complete tree but not a full tree,
and let $u$ be a surficial vertex of $T$.
If $T$ is conflict-free $2$-edge-colorable, then $Sub_T(u)\in\mathcal{F}_1\cup \mathcal{F}_2\cup\mathcal{F}_3\cup\mathcal{F}_4$.
In addition, the following statements hold.
\begin{enumerate}
  \item [(1)] If $Sub_T(u)$ belongs to $\mathcal{F}_1,\mathcal{F}_3$ or $\mathcal{F}_4$, then the edge-colorings are presented as in Figure \ref{three-figs}, respectively. Moreover, $u$ is a fixed vertex.
  \item [(2)] If $Sub_T(u)\in \mathcal{F}_2$, then the partial edge-coloring of $Sub_T(u)$ can be extended to more levels, as shown in Figure \ref{two-f7}.
\end{enumerate}
\end{theorem}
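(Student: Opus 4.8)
The plan is to decompose $Sub_T(u)$ into the maximal full subtrees hanging at the sons of $u$ and to read off, from the single admissible red edge at $u$, which levels and which patterns can coexist. First I would record the structural input: by Proposition \ref{sur} every son $v$ of $u$ has $Sub_T(v)$ a full tree, so by Lemma \ref{subtree} each such $Sub_T(v)$ satisfies $2\le\ell(Sub_T(v))\le 5$ and has coloring pattern in $\{(B),(R),R_1,R_2,R_3,R_4\}$; in particular the colour of the root edge $uv$ is forced to be blue for a level-$4$ subtree ($R_4$) and red for a level-$5$ subtree ($R_3$), while level-$2$ and level-$3$ subtrees admit either colour. I would also note that $u$ has at least two sons (otherwise $u$ would be a $2$-degree vertex), and that, red being the conflict-free colour of $T$ by Lemma \ref{samecolor}, at most one edge incident with $u$ is red.

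The core of the argument is a case analysis on the location of the unique red edge at $u$. The key local fact I would establish first is this: whenever some edge incident with $u$ is red, then for every other son $v$ with $uv$ blue, red must be conflict-free for $uv$, and since $E(u)$ already carries that one red edge, $v$ can have no red out-edge; hence $v$ is a leaf (level $2$, pattern $(B)$) or an S-vertex. Because a blue-rooted level-$3$ subtree has pattern $R_1=(B,D)$, whose level-$1$ vertex is a D-vertex, and a level-$5$ subtree forces a red root, this rules out blue level-$3$ sons and all non-red level-$5$ sons. Running through the positions of the red edge then gives: if $uu^+$ is red (so $u$ is an S-vertex), the sons are exactly level-$2$ $(B)$ and level-$4$ $R_4$ trees; if the red edge is a level-$2$ son $(R)$, again only level-$2$ and level-$4$ sons occur; if it is a level-$3$ son it must be $R_2=(R,S)$, with the remaining sons level-$2$ $(B)$ and possibly level-$4$ $R_4$; and if it is a level-$5$ son $R_3$, the remaining sons are level-$2$ $(B)$ and level-$4$ $R_4$. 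Matching these multisets of levels against the definitions identifies the four families: levels $\{2,3\}$ give $\mathcal{F}_1$, levels $\{2,4\}$ give $\mathcal{F}_2$, levels $\{2,3,4\}$ give $\mathcal{F}_3$, and levels $\{2,4,5\}$ give $\mathcal{F}_4$.

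It then remains to dispose of the case in which no edge at $u$ is red. Here each blue edge $uv$ forces a red out-edge of $v$, so every son is a D-vertex with a blue root edge, i.e. a level-$3$ $R_1$ subtree; but then all leaves of $Sub_T(u)$ lie at a common level and every internal vertex has at least two sons, so $Sub_T(u)$ is itself a full tree, contradicting the surficiality of $u$. The same fullness-exclusion also explains why each family genuinely mixes at least two distinct levels, since a single repeated level would again be full, and so it justifies the parameter ranges ($k_1,\dots,k_7$) in the definitions. Finally I would extract assertions (1) and (2): in the cases producing $\mathcal{F}_1$, $\mathcal{F}_3$ and $\mathcal{F}_4$ the red edge, and hence the whole colouring of $Sub_T(u)$, is forced (the red edge is the unique level-$3$ son in $\mathcal{F}_1$ and $\mathcal{F}_3$, and the unique level-$5$ son in $\mathcal{F}_4$), so $E(u)$ receives the same pattern in every conflict-free $2$-edge-colouring and $u$ is a fixed vertex; in the $\mathcal{F}_2$ case the red edge may be either $uu^+$ (making $u$ an S-vertex) or one level-$2$ son (making $u$ a D-vertex), which yields exactly the two extensions $\mathcal{F}_2^1$ and $\mathcal{F}_2^2$ of Figure \ref{two-f7}. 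The main obstacle is the bookkeeping that must simultaneously respect the forced root colours of level-$4$ and level-$5$ subtrees, the at-most-one-red-edge constraint at $u$, and the S-vertex requirement on the non-red sons, and then couple this with the fullness-exclusion step to certify that no extraneous level combination survives.
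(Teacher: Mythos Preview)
Your argument is correct and reaches the same conclusion, but it is organised differently from the paper's proof. The paper cases on $\ell(Sub_T(v))$, the \emph{maximum} level among the sons' subtrees (Cases $\ell=3,4,5$), and then inside each case examines the possible patterns for the minimum-level son; the $\mathcal{F}_2$/$\mathcal{F}_3$ split in particular emerges from sub-cases of the ``max level $=4$'' discussion. You instead case on the \emph{location of the unique red edge at $u$} (on $uu^+$, on a level-$2$, level-$3$, or level-$5$ son, or absent), after first recording that level-$4$ and level-$5$ subtrees have forced root colours. Your organisation makes the fixed-vertex assertion~(1) essentially immediate, since in $\mathcal{F}_1,\mathcal{F}_3,\mathcal{F}_4$ the red edge is pinned to the unique level-$3$ or level-$5$ son, whereas the paper has to observe this separately at the end of each case. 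Conversely, the paper's max-level viewpoint makes the ``not a full tree'' constraint more visible up front (two distinct levels are assumed from the outset), while you recover it via your fullness-exclusion step.

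One small bookkeeping point: when you write ``levels $\{2,3,4\}$ give $\mathcal{F}_3$'' and ``levels $\{2,4,5\}$ give $\mathcal{F}_4$'', note that $\mathcal{F}_3$ allows $k_4=0$ (levels $\{3,4\}$) and $\mathcal{F}_4$ allows $k_6=0$ or $k_7=0$ (levels $\{4,5\}$ or $\{2,5\}$). Your earlier sentence ``the remaining sons level-$2$ $(B)$ and possibly level-$4$ $R_4$'' already covers this, so it is only the summary line that is slightly imprecise.
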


\begin{proof}
Since $u$ is a surficial vertex of $T$,
$Sub_T(v^\star)$ is a maximal full  tree for each son $v^\star$ of $u$ by Proposition \ref{sur} and $Sub_T(u)$ is not a full tree.
Then there are two sons $v,v'$ of $u$
such that $\ell(Sub_T(v))\neq \ell(Sub_T(v'))$.
Without loss of generality, we assume that $\ell(Sub_T(v))$ is maximum and  $\ell(Sub_T(v'))$ is minimum among all sons of $u$.
By Lemma  \ref{subtree},
we have $2\le \ell(Sub_T(v'))<\ell(Sub_T(v))\le 5$ (this indicates $3\leq \ell(Sub_T(v))\leq 5)$.
Recall that red is the conflict-free color of $T$.
We have the following three cases to discuss.

{\bf Case 1.} $\ell(Sub_T(v))=3$.

In this case  $\ell(Sub_T(v'))=2$, that is, $v'$ is a leaf-vertex.
If $cp(Sub_T(v))= R_1$, then the colors of all edges incident with $u$ are blue, which implies that $uv'$ does not have a conflict-free edge, a contradiction.
If $cp(Sub_T(v))= R_2$,
then
$\ell(Sub_T(v^\star))=2$ and $uv^\star$ is blue for each son $v^\star\neq v$ of $u$.
Thus, there exists an integer $k_1>0$ such that
$Sub_T(u)= \mathcal{T}^{k_1}_2
   \bigoplus \mathcal{T}^{1}_3 \in \mathcal{F}_1$.
Moreover, $u$ is a fixed vertex since the edge-coloring of $Sub_T(u)$ is fixed.

\begin{figure}[htbp]
    \centering
    \includegraphics[scale=0.6,trim={1cm 0.7cm 1cm 0.4cm}]{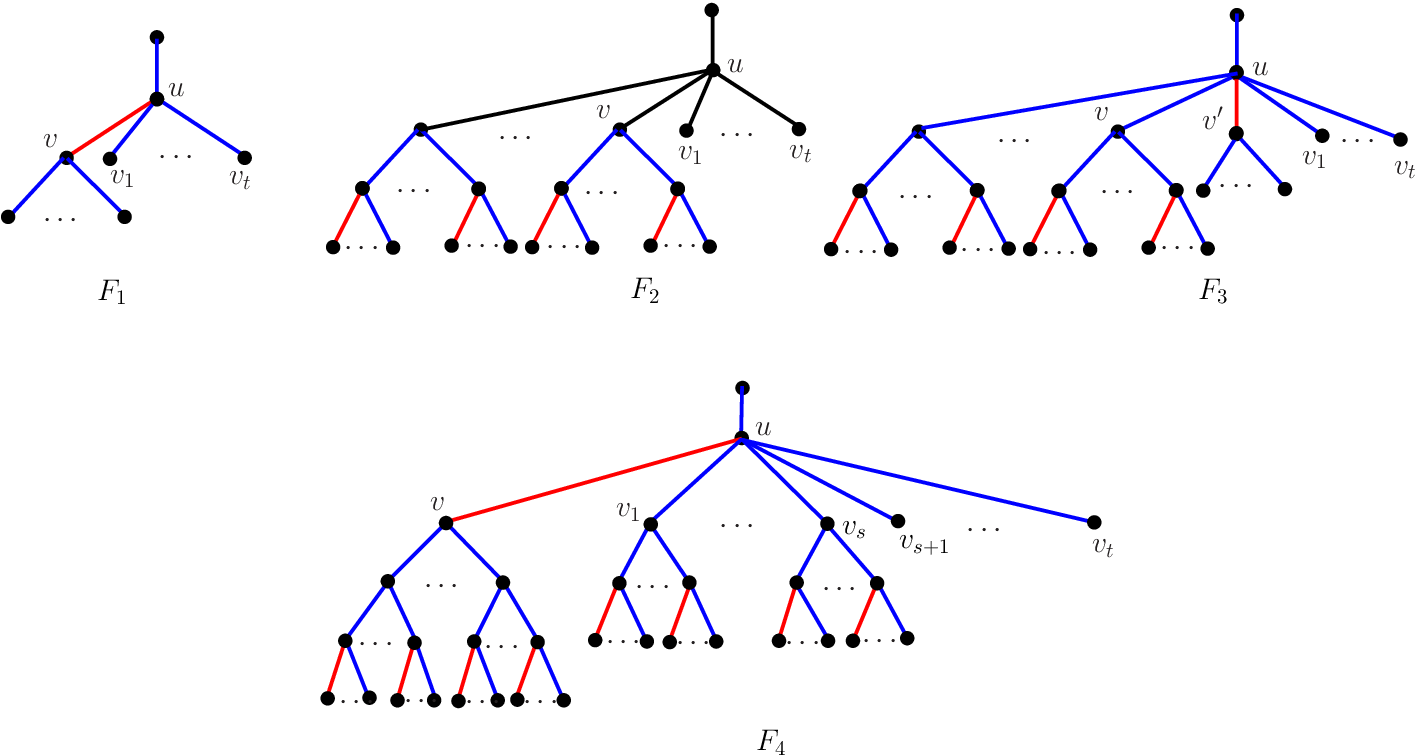}
    \caption{Partial edge-colorings of $\mathcal{F}_1,\mathcal{F}_2,\mathcal{F}_3,\mathcal{F}_4$} \label{three-figs}
\end{figure}

\begin{figure}[htbp]
    \centering
    \includegraphics[scale=0.65,trim={1cm 0.5cm 1cm 0.3cm}]{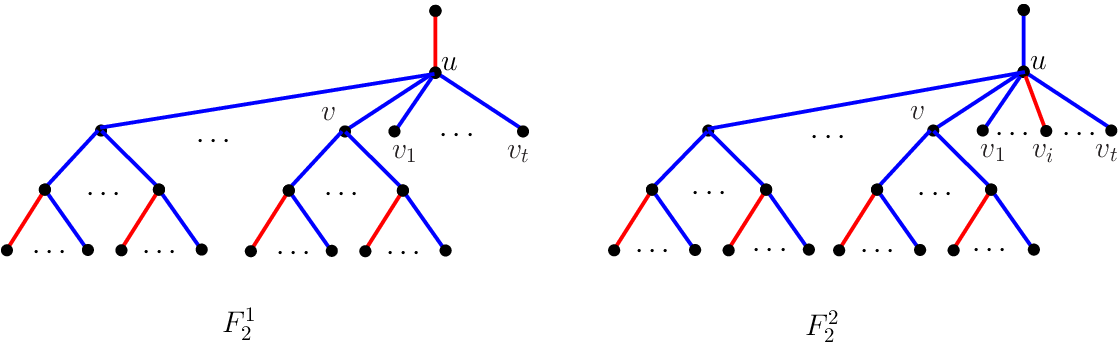}\\
    \caption{Partial edge-colorings of $\mathcal{F}_2$ (extended)} \label{two-f7}
\end{figure}

{\bf Case 2.} $\ell(Sub_T(v))=4$.

In this case $\ell(Sub_T(v'))\in\{2,3\}$, $cp(Sub_T(v))=R_4$
and there is a red edge $f$ incident with $u$ in $T$.
If  $\ell(Sub_T(v'))=3$,
then  $cp(Sub_T(v'))\subseteq \{R_1,R_2\}$ by Lemma \ref{subtree}.
If $cp(Sub_T(v'))= R_1$,
then $uv'$ is incident with two distinct red edges,
a contradiction.
If $cp(Sub_T(v'))= R_2$,
then $f=uv'$ and there exists an integer $k_5>0$ such that
$Sub_T(u)= \mathcal{T}^{1}_3
   \bigoplus \mathcal{T}^{k_5}_4 \in \mathcal{F}_3$.
Moreover, $u$ is a fixed vertex since the edge-coloring of  $Sub_T(v')$ is fixed.

Now we consider the case $\ell(Sub_T(v'))=2$.
If the color of $uv'$ is red, then $uv^\star$ is blue for all sons $v^\star \neq v'$ of $u$. Further, if $\ell(Sub_T(v^\star))=3$, then $uv^\star$ does not have a conflict-free color whenever $cp(Sub_T(v^\star)) = R_2$ or $cp(Sub_T(v^\star)) = R_1$, a contradiction. Hence $Sub_T(u)=\mathcal{T}^{k_2}_2
   \bigoplus \mathcal{T}^{k_3}_4 \in \mathcal{F}_2$ for some integers $k_2>0$ and $k_3>0$, and its partial edge-coloring coincides with $F_2^2$.
If the color of $uv'$ is blue,
then we may assume that
the color of $uv''$ is blue
for each son $v''$ of $u$ with
$\ell(Sub_T(v'))=2$, for otherwise we could replace $v'$ by $v''$ and apply the previous arguments.
Then $Sub_T(u)=\mathcal{T}^{k_2}_2\bigoplus \mathcal{T}^{k_1}_4  \in \mathcal{F}_2$ with partial edge-coloring being $F_2^1$ or there exists a son $v'''$ of $u$ with
$\ell(Sub_T(v'''))=3$ and $cp(Sub_T(v'''))=R_2$, which implies that
$Sub_T(u)=\mathcal{T}^{k_4}_2
   \bigoplus \mathcal{T}^{1}_3 \bigoplus \mathcal{T}^{k_5}_4  \in \mathcal{F}_3$ with $k_4 \ge 0$ and $k_5 \ge 1$.
Note that $u$ is not a fixed vertex in $\mathcal{F}_2$ but is fixed in $\mathcal{F}_3$.

{\bf Case 3.} $\ell(Sub_T(v))=5$.

In this case $\ell(Sub_T(v'))\in\{2,3,4\}$. We claim that there is no son $v^\star$ of $u$ such that $\ell(Sub_T(v^\star)) = 3$. Indeed, if such $v^\star$ exists,
then $cp(Sub_T(v^\star))\in \{R_1,R_2\} $ and there
exists a red edge incident with $v^\star$. This implies that there are two red edges incident with $uv^\star$, leading to a contradiction.
Hence there exist two integers $k_6,k_7$ such that
$Sub_T(u)=\mathcal{T}^{k_6}_2 \bigoplus \mathcal{T}^{k_7}_4 \bigoplus \mathcal{T}^{1}_5  \in \mathcal{F}_4$ and $k_6+k_7>0$.
Moreover, $u$ is a fixed vertex since the edge-colorings of $Sub_T(v)$ and $Sub_T(v')$ are fixed.
\end{proof}

\begin{corollary}\label{corr}
  Let $T$ be a complete tree but not a full tree, and let $u$ be a surficial vertex of $T$.
If $Sub_T(u)\notin \mathcal{F}_1\cup \mathcal{F}_2\cup\mathcal{F}_3\cup\mathcal{F}_4$, then $\chi'_{cf}(T)=3$.
\end{corollary}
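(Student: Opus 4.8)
The plan is to prove this as an immediate contrapositive of the preceding Theorem~\ref{111111}, so the work is almost entirely in invoking results already established. First I would recall that every tree satisfies $\chi'_{cf}(T)\le 3$ (the upper bound stated in the Introduction). Hence to conclude $\chi'_{cf}(T)=3$ it suffices to rule out $\chi'_{cf}(T)\le 2$. Since $T$ is a complete tree that is not a full tree, it possesses at least one maximal full subtree and therefore a surficial vertex $u$; in particular $T$ has more than one edge, so $\chi'_{cf}(T)\ge 2$. Thus the entire burden reduces to excluding the value $\chi'_{cf}(T)=2$.

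Next I would argue by contradiction. Suppose $\chi'_{cf}(T)=2$, i.e.\ $T$ is conflict-free $2$-edge-colorable. Then the hypotheses of Theorem~\ref{111111} are met verbatim: $T$ is a complete but non-full tree, $u$ is a surficial vertex, and $T$ admits a conflict-free $2$-edge-coloring. Theorem~\ref{111111} therefore forces $Sub_T(u)\in \mathcal{F}_1\cup \mathcal{F}_2\cup \mathcal{F}_3\cup \mathcal{F}_4$. This directly contradicts the standing assumption that $Sub_T(u)\notin \mathcal{F}_1\cup \mathcal{F}_2\cup \mathcal{F}_3\cup \mathcal{F}_4$. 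Consequently $\chi'_{cf}(T)\neq 2$, and combining with $2\le \chi'_{cf}(T)\le 3$ yields $\chi'_{cf}(T)=3$.

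There is no genuine obstacle in this argument, since all the structural analysis has already been carried out in Lemma~\ref{subtree} and Theorem~\ref{111111}; the corollary merely restates the classification in a non-constructive ``$=3$'' form. The only point I would take care to record explicitly is the trivial lower bound $\chi'_{cf}(T)\ge 2$: a single color can make every edge conflict-free only in the degenerate case of a single edge, whereas our $T$ contains a surficial vertex together with its full subtrees and hence has at least two edges. With that observation in place, the proof is a one-line deduction from Theorem~\ref{111111}.
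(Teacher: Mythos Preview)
Your proposal is correct and matches the paper's intent: the corollary is stated without proof in the paper, precisely because it is the immediate contrapositive of Theorem~\ref{111111} together with the bounds $2\le \chi'_{cf}(T)\le 3$. Your careful justification of the lower bound $\chi'_{cf}(T)\ge 2$ is a welcome addition that the paper leaves implicit.
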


Suppose that $T$ has a partial edge-coloring $\gamma$ on $E'\subseteq E(T)$.
We say that $\gamma$ can be {\em extended to a conflict-free $2$-edge-coloring} if there is a conflict-free 2-edge-coloring $\Gamma$ of $T$ such that $\gamma(e)=\Gamma(e)$ for each $e\in E'$.
Recall that if $T$ is a partially edge-colored tree and $u$ is an inner vertex of $T$, then $Sub_T(u)$ is a partial edge-colored subgraph inheriting the partial edge-coloring of $T$.
Let $(T,u)$ denote the partially edge-colored subtree that is obtained from $T$ by deleting all descendants of  all sons of $u$.
Note that $(T,u)$ is also a tree without 2-degree vertices.
Algorithm \ref{algorithm} gives an algorithm for determining $\chi'_{cf}(T)$, where $T$ is a tree without 2-degree vertices.
We  prove the feasibility  and discuss the complexity of Algorithm \ref{algorithm} in the following theorem.

\begin{algorithm}[htb!]
\small
\caption{Decide the conflict-free index of a tree  without $2$-degree}\label{algo}
\label{algorithm}
\LinesNumbered 
\KwIn{a tree $T$ without $2$-degree vertices.}
\KwOut{$\chi'_{cf}(T)=2$ or $\chi'_{cf}(T)=3$.}
$G=T$\;
$U=E(G)$\;
choose a leaf vertex $r$ of $G$, and
orient edges such that $G$ is an out-branching with root $r$\;
\While{$U\neq \emptyset$}{
choose a surficial vertex $u$\;
\If{$Sub_G(u)\vdash F$ for some $F\in\mathcal{F}^*_1\cup \mathcal{F}^*_3\cup \mathcal{F}^*_4$}{
        color $Sub_G(u)$ as in $F$\;
        $G=(G,u)$\;
        $U=U\cap E(G)-E(u)$\;}

\ElseIf{$Sub_G(u)\vdash F$ for some $F\in\mathcal{F}^*_2$}{
$G=(G,u)$\;
$U=U\cap E(G)$\;}

\Else
{output ``$\chi'_{cf}(T)=3$''\;
return\;}
$i=i+1$\;}

\If{the edge-coloring of $G$ is a conflict-free edge-coloring}{
        output ``$\chi'_{cf}(T)=2$''\;}
\Else
{output ``$\chi'_{cf}(T)=3$''\;}
\end{algorithm}

\begin{theorem}
Suppose that $T$ is a tree without $2$-degree vertices.
We can decide $\chi'_{cf}(T)$ by using Algorithm \ref{algorithm} in $O(|V(T)|)$ times.
\end{theorem}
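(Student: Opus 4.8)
The plan is to prove two things about Algorithm~\ref{algorithm}: correctness (it outputs $\chi'_{cf}(T)=2$ exactly when $T$ is conflict-free $2$-edge-colorable, and $3$ otherwise) and the linear time bound. First I would dispose of the degenerate cases: if $T$ has at most two edges, or if $T$ is itself a full tree, then $\chi'_{cf}(T)=2$ is immediate and the algorithm handles it through the final check on the colored graph $G$. For the main argument I would proceed by induction on the number of levels, or equivalently on $|V(T)|$, using the surficial vertex $u$ chosen in each iteration of the while-loop as the unit of progress. The key structural fact is Theorem~\ref{111111} together with Corollary~\ref{corr}: once we fix that red is the conflict-free color (Lemma~\ref{samecolor}), the subtree $Sub_G(u)$ rooted below a surficial vertex must be isomorphic (as a partially colored tree) to a member of $\mathcal{F}_1\cup\mathcal{F}_2\cup\mathcal{F}_3\cup\mathcal{F}_4$ in any conflict-free $2$-edge-coloring, and if it matches none of these patterns then $\chi'_{cf}(T)=3$. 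This is precisely the branching logic of the algorithm.

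The heart of the correctness proof is a loop invariant asserting that the partial edge-coloring $\gamma$ maintained on $G$ can be extended to a conflict-free $2$-edge-coloring of the original $T$ if and only if $\chi'_{cf}(T)=2$. I would verify that each iteration preserves this invariant. When $Sub_G(u)$ matches a pattern in $\mathcal{F}_1^*,\mathcal{F}_3^*$ or $\mathcal{F}_4^*$, Theorem~\ref{111111}(1) guarantees that $u$ is a \emph{fixed vertex}, so the coloring on $E(u)$ is forced; the algorithm colors it, contracts the processed levels by replacing $G$ with $(G,u)$, and removes the now-settled edges from $U$. The delicate case is $\mathcal{F}_2$: here Theorem~\ref{111111}(2) tells us $u$ is \emph{not} fixed, since the partial coloring $F_2^*$ extends to both $F_2^1$ and $F_2^2$ according to whether $u$ becomes an S-vertex or a D-vertex. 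The algorithm therefore does not commit to coloring $E(u)$, leaving that decision to be resolved one level higher (note $U$ is not reduced by $E(u)$ in this branch). I would argue that passing to $(G,u)$ with the partial coloring intact loses no conflict-free $2$-colorability, because the two admissible extensions of $Sub_T(u)$ correspond exactly to the two choices the higher level will impose, and every conflict-free coloring of $T$ restricts to one of them.

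The main obstacle I anticipate is making rigorous the claim that deferring the $\mathcal{F}_2$ decision is sound — that is, showing the contracted tree $(G,u)$ with its inherited partial coloring is conflict-free $2$-extendable if and only if $T$ is. One direction is clear since any coloring of $T$ yields one of $(G,u)$ by restriction and matches a recognized pattern by Theorem~\ref{111111}. For the converse I would need to check that a conflict-free extension of $(G,u)$ can be pulled back to all the deleted descendants; this follows because each deleted $Sub_T(v^\star)$ is a full subtree in $\mathcal{T}_1\cup\mathcal{T}_2\cup\mathcal{T}_3\cup\mathcal{T}_4$ whose coloring is determined (or, for $\mathcal{T}_4$, determined up to the S/D choice at $u$ that the higher level now fixes) by Lemma~\ref{subtree}, so the pullback is both well-defined and conflict-free on the reattached edges. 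I would verify the boundary edges $E(u)$ carefully, since those are the only edges straddling the contraction and so are the only place a conflict could be newly created.

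Finally, for the complexity bound I would observe that the preprocessing — choosing a leaf root and orienting $T$ as an out-branching — is $O(|V(T)|)$ via a single traversal, and that surficial vertices together with the level and full-subtree data can be maintained with a bottom-up pass. Each iteration of the while-loop processes a surficial vertex $u$ and removes from $U$ all edges strictly below $u$ (and, outside the $\mathcal{F}_2$ branch, the edges $E(u)$ as well), so every edge of $T$ is examined a bounded number of times across the whole execution. Pattern-matching $Sub_G(u)$ against the finitely many templates in $\mathcal{F}_1^*\cup\mathcal{F}_2^*\cup\mathcal{F}_3^*\cup\mathcal{F}_4^*$ costs time proportional to the size of $Sub_G(u)$, which is $O(1)$ amortized per edge since each full subtree has bounded level ($\ell\le 5$). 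Summing over all iterations gives total running time $O(|E(T)|)=O(|V(T)|)$, and the concluding conflict-free check on the final coloring of $G$ is a single $O(|V(T)|)$ sweep, completing the proof.
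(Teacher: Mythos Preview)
Your proposal is correct and follows essentially the same approach as the paper's proof: both establish correctness through a loop invariant equivalent to the paper's Claim~\ref{clmm-2} (extendability of the partial coloring is preserved when passing from $G_{i-1}$ to $G_i$), handle the $\mathcal{F}_2$ branch by deferring the S/D decision at $u$ to the next level, and derive the linear running time from the fact that $\sum_i |Sub_{G_i}(u_i)| = O(|V(T)|)$. The paper additionally isolates a small auxiliary fact (Claim~\ref{clmm-222}, that $E(u)$ remains uncolored after an $\mathcal{F}_2$ step) and a separate claim (Claim~\ref{clmm-3}) for the case where the underlying shape lies in some $\mathcal{F}_i$ but the inherited partial coloring fails to match $\mathcal{F}_i^*$, both of which you invoke implicitly; these are not substantive divergences.
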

\begin{proof}
Let $G_0=T$.
Suppose that  the ``while'' loop terminates after $n$ steps, and
after the $i$-th step of ``while'', the resulting partially edge-colored tree $G$ is denoted by $G_i$.
For the sake of discussion, we label the surficial vertex $u$ of $G_{i-1}$ as $s(u)=i$ (note that $G_i$ is obtained from $G_{i-1}$ by deleting all descendants but sons of $u$), and then assign $u$ the color green.
In the $i$-th step of ``while'', we must delete some edges of $G_{i-1}$ and then assign colors to an edge subset $E'$ of $G_i$.
Specifically, if $E'\neq \emptyset$, then $E'=E(u)$ and $G_i$ is obtained in lines 6--10 of Algorithm \ref{algorithm}; if $E'=\emptyset$, then $G_i$ is obtained in lines 11--15 of Algorithm \ref{algorithm}.
Note that in each step of ``while'', $Sub_G(u)$ is a partially edge-colored graph.
For easy of discussion, we use $Sub_G^{\downarrow}(u)$ to denote the graph obtained from $Sub_G(u)$ by removing all colors.

At first we prove  the feasibility of Algorithm \ref{algorithm}.

\begin{claim}\label{clmm-222}
If $G_i$ is obtained in lines 11--15 of Algorithm \ref{algorithm} and $u$ is a  green vertex in $G_i$ with $s(u)=i$, then  $E(u)$ is uncolored in $G_i$.
\end{claim}
\begin{proof}
Suppose to the contrary that there exists an edge  $e=uu'$ such that $e$ is colored.
Then $u'$ is a green vertex with $s(u')=j$ for some $j<i$.
If $u'=u^+$, then $u$ is a leaf vertex in $G_{j+1}$. Since $u\in V(G_i)$ and $G_i$ is a subtree of $G_{j+1}$, it follows that $u$ is also a leaf vertex in $G_i$.
This contradicts the the fact that  $Sub^{\downarrow}_{G_{i-1}}(u)\in\mathcal{F}_2$.
If $u'\neq u^+$, then $u'$ is a son of $u$, which implies that all descendants but sons of $u'$ are deleted.
We can get a contradiction by a similar way.
Therefore, $E(u)$ is uncolored in $G_i$.
\end{proof}

For convenience, we also regard an uncolored graph as a partially edge-colored graph.
For each integer $i\in [n]$, let $\gamma_i$ be a partial edge-coloring of $G_i$.

\begin{claim}\label{clmm-2}
For $i\in[n]$, $\gamma_i$ can be extended to a conflict-free 2-edge-coloring  of $G_i$
if and only if $\gamma_{i-1}$ can be extended to a conflict-free 2-edge-coloring of $G_{i-1}$.\end{claim}
\begin{proof}
Suppose that $G_i=(G_{i-1},u)$, i.e., $G_i$ is the graph obtained from $G_{i-1}$ by deleting all descendants but sons of $u$.
By lines 6--15 of Algorithm \ref{algorithm}, $Sub_G(u)\vdash F$ for some $F\in\mathcal{F}^*_1\cup\mathcal{F}^*_2\cup \mathcal{F}^*_3\cup \mathcal{F}^*_4$.
We consider the following two cases.

{\bf Case 1} $Sub^{\downarrow}_{G_{i-1}}(u)$ is a graph of $\mathcal{F}_j$, where $j\in\{1,3,4\}$.

We first prove the sufficiency.
If $\gamma_{i-1}$ can be extended to a conflict-free 2-edge-coloring $\Gamma_{i-1}$ of $G_{i-1}$, then there is a red edge incident with $u$ by Theorem
\ref{111111}.
Hence,
$\Gamma_{i-1}|_{G_i}$ is a conflict-free 2-edge-coloring of $G_i$.
Next, we only need to show that $\Gamma_{i-1}|_{G_i}$ is an edge-coloring extended from  $\gamma_i$, that is, to show that for each red (resp. blue) edge $e\in E(G_i)$ under $\gamma_i$, $e$ is also a red (resp. blue) edge under $\Gamma_{i-1}|_{G_i}$. If $e\notin E(u)$, then since $\gamma_i$ is obtained from $\gamma_{i-1}$ by coloring only edges incident with the green vertices in $G_i$,
it follows that $e$ is red (resp. blue) under $\gamma_{i-1}$, and hence $e$ is also red (resp. blue) under $\Gamma_{i-1}|_{G_i}$.
If $e\in E(u)$, then since $u$ is a fixed vertex by Theorem \ref{111111}, the color pattern of $E(u)$ in $\gamma_i$ is the same as in $\gamma_{i-1}$, and also the same as in $\Gamma_{i-1}|_{G_i}$.

Now we proceed to prove the necessity. Assume that $\gamma_i$ can be extended to a conflict-free 2-edge-coloring $\Gamma_i$ of $G_i$. Since $u$ is incident with a leaf vertex in $G_i$,
it follows that there is a red edge incident with $u$.
Since $Sub_G(u)\vdash F$ for some $F\in\mathcal{F}^*_1\cup  \mathcal{F}^*_3\cup \mathcal{F}^*_4$, the union
 of $\Gamma_i$ and the edge-coloring of $F$, denoted by $\Gamma^*$, is a conflict-free 2-edge-coloring of $G_{i-1}$.
Note that $\gamma_{i}$ is obtained from $\gamma_{i-1}|_{G_i}$ and the edge-coloring of $E(u)$ as $Sub_G(u)$.
Thus, $\gamma_{i-1}$ can be extended to $\Gamma^*$.

{\bf Case 2} $Sub^{\downarrow}_{G_{i-1}}(u)$ is a graph of $\mathcal{F}_2$.

If $\gamma_{i-1}$ can be extended to a conflict-free 2-edge-coloring $\Gamma_{i-1}$ of $G_{i-1}$, then there is a red edge  incident with $u$ whenever $Sub_{G_{i-1}}(u)$ is a graph of $\mathcal{F}_2^1$ or $\mathcal{F}_2^2$. Hence, $\Gamma_{i-1}|_{G_i}$ is a conflict-free 2-edge-coloring of $G_i$.
In order to prove the sufficiency, we only need to show that $\Gamma_{i-1}|_{G_i}$ is an edge-coloring extended from  $\gamma_i$, that is, to show that for each red (resp. blue) edge $e\in E(G_i)$ under $\gamma_i$, $e$ is also a red (resp. blue) edge under $\Gamma_{i-1}|_{G_i}$. By Claim \ref{clmm-222}, each of $E(u)$ is uncolored in $G_i$.
Hence $e\notin E(u)$.
Since $\gamma_i$ is obtained from $\gamma_{i-1}$ by coloring only edges incident with the green vertices in $G_i$,
it follows that $e$ is red (resp. blue) under $\gamma_{i-1}$, and hence $e$ is also red (resp. blue) $\Gamma_{i-1}|_{G_i}$.

Then we proceed to show the necessity. Assume that $\gamma_i$ can be extended to a conflict-free 2-edge-coloring $\Gamma_i$ of $G_i$.
Since $u$ is adjacent to a leaf vertex  in $G_i$,
it follows that there is a red edge incident with $u$,  see Figure \ref{two-f7}.
In either case, we can extend $\Gamma_{i}$ to a conflict-free $2$-edge-coloring of $G_{i-1}$.
Furthermore, this edge-coloring is also extended from $\gamma_{i-1}$.
\end{proof}

\begin{claim}\label{clmm-3}
Let $G_{i+1}=(G_i,u)$. If $Sub^{\downarrow}_{G_i}(u)$ is isomorphic to some graph in $\mathcal{F}_1\cup\mathcal{F}_2\cup\mathcal{F}_3\cup\mathcal{F}_4$ but  $Sub_{G_i}(u)\nvdash\mathcal{F}_1^*\cup\mathcal{F}^*_2\cup\mathcal{F}^*_3
\cup\mathcal{F}^*_4$, then $\chi'_{cf}(T)=3$.
\end{claim}
\begin{proof}
Suppose to the contrary that $\chi'_{cf}(T)=2$.
By Claim \ref{clmm-2}, we have that $2=\chi'_{cf}(T)=\chi'_{cf}(G_0)=\chi'_{cf}(G_1)=\cdots=\chi'_{cf}(G_{i})$, and the partial edge-coloring of $G_{i}$ can be extended to a conflict-free 2-edge-coloring of $G_{i}$.

If $Sub^{\downarrow}_{G_i}(u)$ is isomorphic to some element of $\mathcal{F}_1\cup\mathcal{F}_3\cup\mathcal{F}_4$, then $Sub_{G_i}(u)$ has the unique 2-edge-coloring in any conflict-free 2-edge-coloring of $G_{i}$ and the coloring pattern is the same as the corresponding element in $\mathcal{F}_1^*\cup\mathcal{F}^*_3
\cup\mathcal{F}^*_4$. Hence, $Sub_{G_i}(u)\vdash\mathcal{F}_1^*\cup\mathcal{F}^*_3\cup\mathcal{F}^*_4$, a contradiction.

If $Sub^{\downarrow}_{G_i}(u)$ is isomorphic to an element of $\mathcal{F}_2$, then $E(u)$ is uncolored in $Sub_{G_i}(u)$ by Claim \ref{clmm-222}. Note that in any conflict-free 2-edge-coloring $G_i$, the pattern of $Sub_{G_i}(u)$ belongs to $\mathcal{F}^1_2$ or $\mathcal{F}_2^2$.
Hence, $Sub_{G_i}(u)\vdash\mathcal{F}^*_2$, a contradiction.
Thus, $\chi'_{cf}(T)=3$.
\end{proof}

By Claim \ref{clmm-2}, the partial edge-coloring of $G_i$ can be extended to a conflict-free 2-edge-coloring if and only if $G_0=T$ has a conflict-free 2-edge-coloring for each $i\in[n]$.
Recall that the ``while'' stops after $n$ steps.
If the ``while'' stops when
  $Sub_{G_n}(u)$ does not belong to $\{F_1,F_2,F_3,F_4\}$, then $\chi'_{cf}(G_n)=3$ by Corollary \ref{corr}.
If the ``while'' stops when $Sub_{G_n}(u)$ is isomorphic to one graph of $\mathcal{F}_1\cup\mathcal{F}_2\cup \mathcal{F}_3\cup \mathcal{F}_4$ but the partial edge-coloring of $Sub_{G_n}(u)$ does not coincide with any edge-colored graph of $\mathcal{F}_1\cup\mathcal{F}_2\cup \mathcal{F}_3\cup \mathcal{F}_4$, then
$\chi'_{cf}(G_n)=3$ by Claim \ref{clmm-3}.
If the ``while'' loop terminates when $U=\emptyset$, then we get an edge-coloring of $G_n$.
By Claim \ref{clmm-2}, $\chi'_{cf}(G_n)=2$ if and only if $\chi'_{cf}(T)=2$. The proof is completed.

Next, we discuss the complexity of Algorithm \ref{algorithm}.
Recall that the tree $T$ is rooted at $r$ ($r$ is a leaf vertex). We label each vertex $v\in V(T)$ as $d_T(v,r)$, this takes $O(|V(T)|)$ times. 
Note that in the $i$-th step of Algorithm \ref{algorithm}, the subtree $G_i$ is also rooted at $r$ and each vertex $v\in V(G_i)$ is labelled by $d_{G_i}(v,r)=d_T(v,r)$.
In line 5 of Algorithm \ref{algorithm}, we use Algorithm \ref{alg} to find a surficial vertex $u$.
It is clear that Algorithm \ref{alg} can find a surficial vertex, since we begin with a vertex $x$ such that $d_T(r,x)$ is maximum.
 
Assume that $u_i$ is the new surficial vertex in $G_i$ for each $0\leq i<n$.
Then $G_{i+1}$ is obtained from $G_i$ by deleting all descendants but sons of $u_i$.
It takes totally $O(\sum_{0\leq i<n}|Sub_{G_i}(u)|)$ times in line 5 of Algorithm \ref{algorithm}. Furthermore,  the ``while'' loop takes $O(\sum_{0\leq i<n}|Sub_{G_i}(u)|)$ times.
It is obvious that lines 21--26 of Algorithm \ref{algorithm} take $O(|V(T)|)$ times.
So, Algorithm \ref{algorithm} takes $O(|V(T)|)+O(|V(T)|)+O(\sum_{0\leq i<n}|Sub_{G_i}(u)|)=O(|V(T)|)$ times since $\sum_{0\leq i<n}|Sub_{G_i}(u)|\leq O(|V(T)|)$.
\end{proof}
\begin{algorithm}[htb!]
\small
\caption{Find a surficial vertex}\label{algo}
\label{alg}
\LinesNumbered 
\KwIn{a complete tree $T$ rooted at a leaf vertex $r$, with each vertex $u\in V(T)$ labelled by $\ell(v)=d_T(v,r)$.}
\KwOut{a surficial vertex $u$.}
choose a leaf vertex $x$ with $\ell(x)$ maximum\;
let $u=x^+$\;
\While{$Sub_T(u)$ is a full tree}{
$u=u^+$\;}
\end{algorithm}

\section{Trees with 2-degree vertices}

Algorithm \ref{algorithm} can only distinguish $\chi'_{cf}(T)$ when $T$ is a tree without 2-degree vertices.
If $T$ has 2-degree vertices, then the problem is complicated since the conflict-free colors of the edges may not be the same and we cannot apply Lemma \ref{samecolor}.
Next, we give a sufficient condition for $\chi'_{cf}(T)=2$, where $T$ is a general tree. Let $T_{= 2}$ and $T_{\geq 3}$ denote subgraphs of $T$ induced by edge sets
$\bigcup_{v:d_T(v)= 2}E_T(v)$ and
$\bigcup_{v:d_T(v)\geq 3}E_T(v)$, respectively.

\begin{theorem}\label{thm-main-2}
For a tree $T$, if each component of $T_{\geq 3}$ is conflict-free $2$-edge-colorable and each component of $T_{=2}$ has at least $5$ vertices, then $\chi'_{cf}(T)=2$.
\end{theorem}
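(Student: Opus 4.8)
My plan is to build an explicit conflict-free $2$-edge-coloring of $T$, coloring the high-degree ``cores'' first and the degree-$2$ ``chains'' afterwards. I would begin by unwinding the decomposition forced by the definitions. An edge $uv$ lies in $T_{\ge 3}$ only (both endpoints of degree $\neq 2$, one of degree $\ge 3$), in $T_{=2}$ only (both endpoints of degree $\le 2$, one of degree $2$), or in both exactly when one endpoint has degree $2$ and one has degree $\ge 3$; I call these last the \emph{connectors}, and the components of $T_{\ge 3}$ and $T_{=2}$ meet precisely along them. The observation to isolate at the outset concerns closed neighbourhoods. If $e$ is \emph{core-internal} (both endpoints of degree $\neq 2$), then $E_T(e)$ lies entirely inside the core $H$ containing it and in fact $E_H(e)=E_T(e)$; for a connector $e=wx$ with $d_T(x)=2$ one has $E_T(e)=E_T(w)\cup\{wx,xx'\}$, so exactly one further chain edge $xx'$ enters the neighbourhood; and for a chain-only edge the neighbourhood stays inside its chain. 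Consequently any conflict-free $2$-edge-coloring of a core $H$ already makes every core-internal edge of $H$ conflict-free in $T$, regardless of how the chains are later colored.

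For the second step I would fix, for every component $H$ of $T_{\ge 3}$, a conflict-free $2$-edge-coloring (these exist by hypothesis); by Observation~\ref{obs} applied to $H$, every high-degree vertex $w$ has $E_T(w)=E_H(w)$ either monochromatic or carrying a unique color, so in all cases some color occurs at most once on $E_T(w)$. Only the connectors and chain-only edges then remain. For a connector $wx$ the enlarged neighbourhood is $E_T(w)\cup\{xx'\}$, and because $E_T(w)$ already has a color of multiplicity at most one, there is a choice of color for the single chain edge $xx'$ that leaves some color of multiplicity exactly one, making $wx$ conflict-free. This is where the size hypothesis enters: for $xx'$ to be available as a free parameter it must be neither a leaf edge nor a second connector, i.e. $x'$ must again have degree $2$, and the requirement that each component of $T_{=2}$ be large is used to guarantee, at every attachment to a core, a run of consecutive degree-$2$ vertices long enough to expose such a free edge. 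Having pinned the end edges of each chain in this way, I would color the interior of each chain by a periodic pattern (for instance repeating $R,B,B$) so that every chain-only edge gets a conflict-free color and the prescribed end edges are respected.

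The verification is routine for core-internal edges (inherited from the core) and for chain-only edges (from the periodic pattern), so the heart of the argument, and the step I expect to be the main obstacle, is the reconciliation at the junctions. Several things must be arranged simultaneously. First, a single vertex $w$ may carry several connectors whose fixing edges $x_ix_i'$ must be chosen together; these edges are distinct and each sees the same neighbourhood $E_T(w)$, so the choices should not interfere, but this independence must be checked. Second, and more delicately, the colors forced on the two end edges of a chain bridging two cores must be simultaneously realizable by a conflict-free coloring of the chain interior, and it is exactly the number of interior degree-$2$ vertices that provides the slack to interpolate between the two prescriptions. Finally, one must confirm that the crucial free chain edge is genuinely present at every connector: this is precisely what breaks for a too-short attachment, where the neighbouring chain edge is a pinned leaf edge and the connector cannot be repaired, so isolating the exact size threshold that excludes this configuration is both the technical core of the proof and the reason the bound is sharp.
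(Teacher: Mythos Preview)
Your decomposition into cores, connectors, and chain-only edges is correct, and the observation that core-internal edges stay conflict-free regardless of how the chains are colored is sound. The gap is exactly where you flag the main obstacle. You assert that the number of interior degree-$2$ vertices provides enough slack to interpolate between the prescriptions forced at the two ends of a chain, but this fails already at the threshold $t=5$. Take two disjoint copies of $K_{1,3}$ with centres $u$ and $v$, pick one leaf of each, and link them by a $3$-edge path, so that the chain component is $u\,x_2\,x_3\,x_4\,v$ on five vertices. Color the first star with $ux_2$ blue and one other leaf edge red (a valid conflict-free $2$-coloring of that core), and color the second star symmetrically with $vx_4$ blue. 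Now the edge $x_2x_3$ you call ``free'' is in fact forced: making $ux_2$ conflict-free in $E_T(u)\cup\{x_2x_3\}$ requires $x_2x_3$ blue, and symmetrically $x_3x_4$ must be blue, whence $E_T(x_2x_3)=\{B,B,B\}$ has no unique color. There are no further interior edges to adjust, so your scheme cannot recover. (An analogous obstruction occurs for $t=6$ when the majority colors at the two attachment vertices are opposite.)

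The paper's proof avoids this by \emph{not} fixing all the cores in advance. It deletes the interior vertices $x_3,\dots,x_{t-2}$ of a single chain, applies the induction hypothesis to the two resulting subtrees $T_1,T_2$, and then---crucially---uses the freedom to swap red and blue globally in either $T_i$ before gluing. That swap lets the unique/majority colors at the two ends of the chain be aligned or opposed as needed, after which a short case analysis (on the parity of $t$ and on whether each connector happens to be the unique edge at its endpoint) produces an explicit coloring of the chain. Your plan freezes the core colorings independently before any chain is processed, which forfeits precisely this coordination; to salvage the direct construction you would need a mechanism for propagating color-swap choices between cores along the chains, and that is essentially what the inductive argument supplies for free.
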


\begin{proof}
We prove the theorem by induction on $|T|$.
It is obvious that the result holds for $|T|\leq 4$.
If $T$ does not contain 2-degree vertices, then $\chi'_{cf}(T) = \chi'_{cf}(T_{\ge 3}) =2$.
So, assume that $T_{= 2}$ is a nonempty graph and $P=x_1x_2\ldots x_t$ is a component of $T_{= 2}$, where $t\geq 5$.
Let $P'=x_3x_4\ldots x_{t-2}$
and let $T_1,T_2$ be the two components of $T-V(P')$ such that $x_2$ is a leaf-vertex of $T_1$ and $x_{t-1}$ is a leaf-vertex of $T_2$. Then $T_1$ and $T_2$ are both conflict-free 2-edge colorable by induction.
If  $d_{T_1}(x_1)=1$, it follows that $T_1$ is an edge $x_2x_1$. This case is trivial since we can get a conflict-free red/blue edge-coloring of $T$ obtained from a conflict-free red/blue edge-coloring of $T_2$ by  coloring edges in $P-x_1$ with red and blue alternately.
Similarly the case $d_{T_2} (x_t) = 1$ is also trivial, and hence in the following we may assume that $E_{T_1}(x_1)$ and $E_{T_2}(x_t)$ have a conflict-free edge, respectively.
In order to show the theorem, we consider the following three cases.

\noindent{\bf Case 1.}
$x_1x_2$ and $x_{t-1} x_t$ are the conflict-free edges of $E_{T_1}(x_1x_2)$ and $E_{T_2}(x_{t-1}x_{t})$, respectively.

Note that we can give conflict-free edge-colorings to $T_1$ and $T_2$ such that  the colors of $x_1x_2$ and $x_{t-1} x_t$ are red.
Then we color  $P$ alternately by red and blue when  $t$ is even.
We color $x_1x_2P'$ alternately by red and blue, and color  $x_{t-2}x_{t-1}$ by blue when  $t$ is odd.
It is clear that $T$ is conflict-free 2-edge-colorable.

\noindent{\bf Case 2.} $x_1x_2$ is the conflict-free edge of $E_{T_1}(x_1x_2)$,
but $x_{t-1} x_t$ is not the conflict-free edge of  $E_{T_2}(x_{t-1}x_t)$.

Note that we can give conflict-free edge-colorings to $T_1$ and $T_2$ such that  the colors of $x_1x_2$ and $x_{t-1} x_t$ are red.
Then red is the conflict-free color in $T_1$ and blue is the conflict-free color in $T_2$.
If $t$ is odd, then we color $x_1x_2P'$ alternately by red and blue, and color  $x_{t-2}x_{t-1}$ by red.
If $t$ is even, then we color $P'$ alternately by red and blue such that the color of $x_3x_4$ is blue, and color $x_2x_3$ by blue and color $x_{t-2}x_{t-1}$ by red.
It is clear that $T$ is conflict-free 2-edge-colorable.

\noindent{\bf Case 3.} $x_1x_2$ is not the conflict-free edge of $E_{T_1}(x_1x_2)$ and $x_{t-1} x_t$ is not the conflict-free edge of $E_{T_2}(x_{t-1}x_t)$.

If $t$ is odd, then we give conflict-free edge-colorings to $T_1$ and $T_2$ such that the conflict-free color
 of $x_1x_2$ is blue and the conflict-free color of $x_{t-1}x_t$ is red.
  It follows that the color of $x_1x_2$ is red  and the color of $x_{t-1}x_t$ is blue.
We color $x_2P'x_{t-1}$ alternately by red and blue such that the color of $x_2x_3$ is red.
 It is clear that $T$ is conflict-free 2-edge-colorable.

If $t$ is even, then we give conflict-free edge-colorings to $T_1$ and $T_2$ such that the conflict-free colors
 of $x_1x_2$ and $x_{t-1}x_t$ are red.
 It follows that the colors of $x_1x_2$ and $x_{t-1}x_t$ are blue, respectively.
 We color $x_2P'x_{t-1}$ alternately by red and blue such that the color of $x_2x_3$ is blue.
 It is clear that $T$ is conflict-free 2-edge-colorable.
\end{proof}

\begin{figure}[htbp]
    \centering
    \includegraphics[width=240pt]{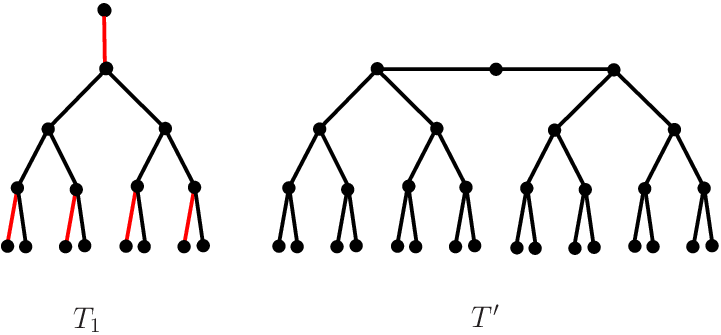}\\
    \caption{The unique conflict-free edge-coloring of $T_1$ and the tree $T'$.} \label{counter1}
\end{figure}

\begin{remark}
If $T_{= 2}$  contains a component of order less than five, then Theorem \ref{thm-main-2} is not true.
For instance, the tree $T_1$ in Figure \ref{counter1} has a unique conflict-free 2-edge-coloring.
Let $T'$ be a tree such that $T_{\geq 3}$ has two components and each component is isomorphic to $T_1$, and $T_{= 2}$ is a $P_3$.
It is clear that $T'$ does not have any conflict-free  coloring with two colors.
Hence, $\chi'_{cf}(T')=3$.
Similarly, the tree $T_2$ in Figure \ref{counter2} has a unique conflict-free 2-edge-coloring.
Let $T''$ be a tree such that $T_{\geq 3}$ has two components and each component is isomorphic to $T_2$, and $T_{= 2}$ is a $P_4$.
It is clear that $T''$ does not have any conflict-free edge-coloring with two colors.
Hence, $\chi'_{cf}(T'')=3$.

Although deciding whether  $\chi'_{cf}(G)=2$ is NP-complete even if $G$ is a bipartite graph \cite{P-Li}, we believe that one can determine whether $\chi'_{cf}(T)=2$ for a tree $T$ in polynomial time.
\end{remark}

\begin{figure}[htbp]
    \centering
    \includegraphics[width=260pt]{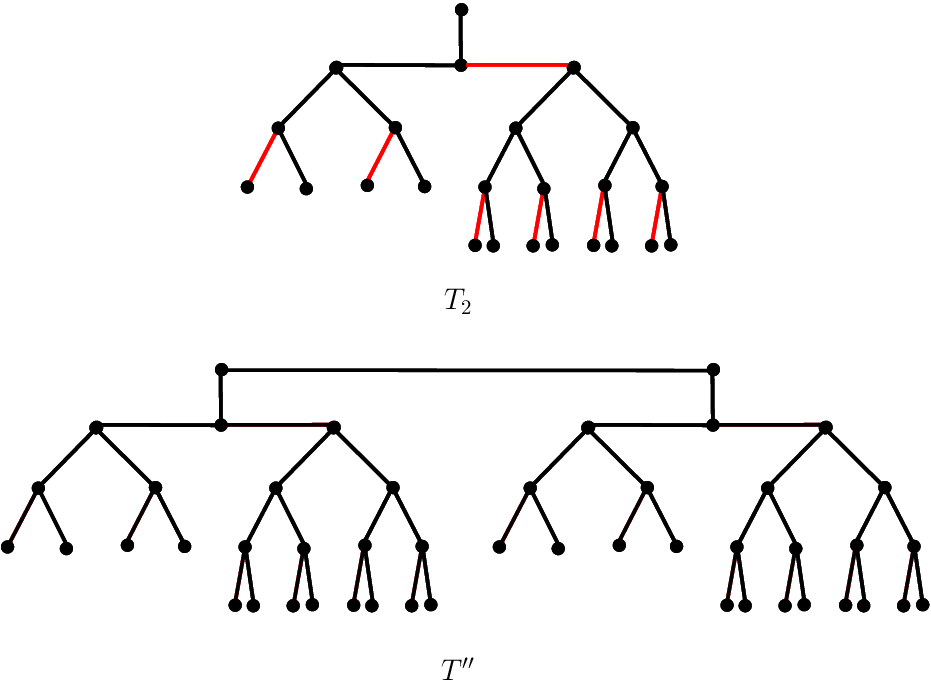}\\
    \caption{The unique conflict-free edge-coloring of $T_2$ and the tree $T''$.} \label{counter2}
\end{figure}

\section{Acknowledgements}
\noindent
Ethan Li is supported by the Fundamental Research Funds for the Central Universities (GK202207023). Ping Li is supported by the National Science Foundation of China No. 12201375.

\end{document}